 \newcommand\comment[1]{}
 \newcommand\varset[1]{\vec{#1}}
 \newcommand\chahn[1]{\textcolor{blue}{C. Hahn: #1}}
 \newcommand{\stone}[2]{\smash{\begin{tikzpicture}[baseline={(0,-0.02)}] \node at (0,0.15) {\tiny #1};\node at (0,0) {\tiny #2};\end{tikzpicture}}}
 \newcommand{\all}{$\forall^*$ }
 \newcommand{\ex}{$\exists^*$ }
 \newcommand{\allex}{$\forall^*\exists^*$ }
 \newcommand{\exall}{$\exists^*\forall^*$ }
 \definecolor{yell}{rgb}{0.99,0.78,0.07}
 \definecolor{lgreen}{RGB}{144,238,144}
 \definecolor{tred}{RGB}{255,99,71}
\begin{document}
 \title{Deciding Hyperproperties\thanks{This work was partially supported by the German Research Foundation (DFG) in the Collaborative Research Center 1223 and by the Graduate School of Computer Science at Saarland University.}}
 %\subtitle{}
 \author{Bernd Finkbeiner and Christopher Hahn}
 \institute{Saarland University \\ Saarbr\"ucken, Germany \\
 	\email{\{finkbeiner|hahn\}@react.uni-saarland.de}}
 %\maketitle	
 	\maketitle
 	
 	\begin{abstract}
 		Hyperproperties, like observational determinism or symmetry, cannot be expressed as properties of individual computation traces, because they describe a relation between multiple computation traces. HyperLTL is a temporal logic that captures such relations through trace variables, which are introduced through existential and universal trace quantifiers and can be used to refer to multiple computations at the same time. In this paper, we study the satisfiability problem of HyperLTL. We show that the problem is PSPACE-complete for alternation-free formulas (and, hence, no more expensive than LTL satisfiability), EXPSPACE-complete for $\exists^*\forall^*$ formulas, and undecidable for $\forall\exists$ formulas. %As a practically relevant corollary we obtain that implication and equivalence between alternation-free formulas are decidable in exponential space.
 		Many practical hyperproperties can be expressed as alternation-free formulas. Our results show that both satisfiability and implication are decidable for such properties.  
 	\end{abstract}
 	
 	\section{Introduction}
 	
 	Hyperproperties~\cite{DBLP:journals/jcs/ClarksonS10} are system properties that relate multiple computation traces. For example, in the design of a system that handles sensitive information, we might specify that a certain secret is kept confidential by requiring that the
 	system is \emph{deterministic} in its legitimately observable inputs, i.e., that
 	all computations with the same observable inputs must have the same observable outputs, independently of the secret~\cite{DBLP:conf/sp/Roscoe95,DBLP:conf/csfw/ZdancewicM03}. In the design of an access protocol for a shared resource, we might specify that the access to the resource is \emph{symmetric} between multiple clients by requiring that for every computation and every permutation of the clients, there exists a computation where the access is granted in the permuted order~\cite{DBLP:conf/cav/FinkbeinerRS15}.
 	
 	To express hyperproperties in a temporal logic, linear-time temporal logic (LTL) has recently been extended with trace variables and trace quantifiers. In HyperLTL~\cite{DBLP:conf/post/ClarksonFKMRS14}, observational determinism can, for example, be expressed as the formula
 	$\forall\pi.\forall\pi'.\;\LTLsquare(I_\pi=I_{\pi'})\rightarrow\LTLsquare(O_\pi=O_{\pi'}),$
 	where $I$ is the set of observable inputs and $O$ is the set of observable outputs. The universal quantification of the trace variables $\pi$ and $\pi'$ indicates that the property must hold for all pairs of computation traces. It has been shown that many hyperproperties of interest can be expressed in HyperLTL~\cite{markusPhD}.
 	
 	In this paper, we study the \emph{satisfiability problem} of HyperLTL.
 	Unlike the model checking problem, for which algorithms and tools
 	exist~\cite{DBLP:conf/post/ClarksonFKMRS14,DBLP:conf/cav/FinkbeinerRS15}, the decidability and complexity of the satisfiability problem
 	was, so far, left open. The practical demand for a decision procedure is
 	strong. Often, one considers multiple formalizations of similar, but
 	not necessarily equivalent hyperproperties. An alternative (and slightly stronger)
 	version of observational determinism requires, for example, that
 	differences in the observable output may only occur \emph{after} differences
 	in the observable input have occurred:
 	$\forall\pi.\forall\pi'.\;(O_\pi=O_{\pi'}) ~{\mathcal W}~
 	(I_\pi\neq I_{\pi'})$. A decision procedure for HyperLTL would allow
 	us to automatically check whether such formalizations imply each other. Another important application is to check whether the
 	functionality of a system, i.e., a standard trace property, is
 	compatible with the desired hyperproperties, such as confidentiality.
 	Since both types of properties can be expressed in HyperLTL, a decision procedure for HyperLTL 
 	would make it possible to identify inconsistent system requirements early on,
 	before an attempt is made to implement the requirements.
 	
 	The fundamental challenge in deciding hyperproperties is that 
 	hyperproperties are usually not
 	$\omega$-regular~\cite{DBLP:conf/icalp/AlurCZ06}.
 	HyperLTL formulas thus cannot be translated into equivalent
 	automata~\cite{DBLP:journals/it/FinkbeinerR14}. Intuitively, since
 	hyperproperties relate multiple infinite traces, an automaton, which only considers one trace at a time, would have to memorize an infinite amount of information from one trace to the next. This means that the
 	standard recipe for checking the satisfiability of a temporal logic,
 	which is to translate the given formula into an equivalent B\"uchi
 	automaton and then check if the language of the automaton is
 	empty~\cite{DBLP:journals/iandc/VardiW94}, cannot be applied to
 	HyperLTL.
 	
 	In model checking, this problem is sidestepped by verifying the
 	\emph{self-composition}~\cite{DBLP:conf/csfw/BartheDR04} of the given system: instead of verifying a
 	hyperproperty that refers to $n$ traces, we
 	verify a trace property that refers to a \emph{single} trace of a new
 	system that contains $n$ copies of the original system.
 	Since the satisfiability problem does not refer to a system, this idea cannot
 	immediately be applied to obtain a decision procedure for HyperLTL. However, it would seem natural to define a similar self-composition, on the formula rather than the system, in order to determine satisfiability.

 	%\begin{figure}
 	%	\centering
 	%	\begin{tikzpicture}[lblell/.style={ellipse,fill=white, fill opacity=0.3, text opacity=1},
 	%	lblrect/.style={rectangle,fill=white, fill opacity=0.3, text opacity=1},
 	%	scale=2]
 	%	\clip (0,0) rectangle (4,3);
 	%	\fill[tred, draw=black] (0,0) ellipse [x radius = 5cm, y radius = 5 cm];
 	%	\fill[yell, draw=black] (1,-0.5) ellipse [x radius = 4cm, y radius = 3 cm];
 	%	\fill[tred, draw=black] (4,-1) ellipse [x radius = 2cm, y radius = 3cm];
 	%	\fill[lgreen, draw=black](2,-1.5) ellipse [x radius=4 cm, y radius=3cm];
 	%	\draw (0,0) rectangle (4,3);
 	%	\node (forall) at (2,0.5) {$\forall^*/\exists^*$};
 	%	\node (forall exists) at (3.6,1.6) {$\forall \exists$};
 	%	\node (exists forall) at (1,2) {$\exists^* \forall^*$};
 	%	\end{tikzpicture}
 	%	\caption{Fragments of HyperLTL. Satisfiability is decidable for $\forall^*/\exists^*$ and  $\exists^*\forall^*$-formulas and undecidable for any fragment that contains the  $\forall\exists$-formulas.}
 	%	\label{fragments}
 	%\end{figure}
 	
 	We organize our investigation according to the quantifier structure of
 	the HyperLTL formulas. LTL, for which the satisfiablity problem is already solved~\cite{DBLP:journals/jacm/SistlaC85}, is the sublogic of HyperLTL where the formulas have a single universally quantified trace variable, which is usually left implicit. The next larger fragment consists of the alternation-free formulas, i.e., formulas with an arbitrary number of trace variables and a quantifier prefix that either consists of only universal or only existential quantifiers. Many hyperproperties of practical interest, such as observational determinism, belong to this fragment.
 	It turns out that the satisfiability of alternation-free formulas can indeed be reduced to the satisfiability of LTL formulas by replicating the atomic propositions such that there is a separate copy for each trace variable. This construction is sound, because in an alternation-free formula, the values for the quantifiers can be chosen independently of each other. The size of the resulting LTL formula is the same as the given HyperLTL formula; as a result, the satisfiability problem of the alternation-free fragment has the same complexity, PSPACE-complete, as LTL satisfiability.
 	
 	If the formula contains a quantifier alternation, the values of the quantifiers can no longer be chosen independently of each other. However, if
 	the quantifier structure is of the form $\exists^*\forall^*$, i.e., the formula begins with an existential quantifier and then has a single quantifier alternation, then it is still possible to reduce HyperLTL satisfiability to LTL satisfiability by explicitly considering all possible interactions between the existential and universal quantifiers. For example, $\exists \pi_0 \exists \pi_1 \forall \pi_{2} .~ (\LTLcircle p_{\pi_0}) \wedge (\LTLsquare p_{\pi_1}) \wedge (\LTLdiamond p_{\pi_2})$ is equisatisfiable to $\exists \pi_0 \exists \pi_1 .~ (\LTLcircle p_{\pi_0}) \wedge (\LTLsquare p_{\pi_1}) \wedge (\LTLdiamond p_{\pi_0})  \wedge (\LTLdiamond p_{\pi_1})$, which is in turn equisatisfiable to the LTL formula $(\LTLcircle p_0) \wedge (\LTLsquare p_1) \wedge (\LTLdiamond p_0)  \wedge (\LTLdiamond p_1)$. In general, enumerating all combinations of existential and universal quantifiers causes an exponential blow-up and we show that the satisfiability problem for the  $\exists^*\forall^*$-fragment is indeed EXPSPACE-complete. This high complexity is, however, relativized by the fact that practical hyperproperties rarely need a large number of quantifiers. If we bound the number of universal quantifiers by a constant, the complexity becomes PSPACE again.
 	
 	Formulas where an existential quantifier occurs in the scope of a universal quantifier make the logic dramatically more powerful, because they can be used to enforce, inductively, models with an infinite number of traces. We show that a single pair of quantifers of the form $\forall \exists$ suffices to encode Post's correspondence problem. The complete picture is thus as summarized in Table~\ref{summary}: The largest decidable fragment of HyperLTL is the EXPSPACE-complete $\exists^*\forall^*$ fragment. Bounding the number of universal quantifiers and in particular restricting to alternation-free formulas reduces the complexity to PSPACE. Any fragment that contains the $\forall\exists$ formulas is undecidable.
 	
 	From a theoretical point of view, the undecidability of the $\forall\exists$ fragment is a noteworthy result, because it confirms the intuition that hyperproperties are truly more powerful than trace properties. In practice, already the alternation-free fragment suffices for many important applications (cf.~\cite{DBLP:conf/cav/FinkbeinerRS15}). From a practical point of view, the key result of the paper is therefore that both satisfiability of alternation-free formulas and implication between alternation-free formulas, which can be expressed as an unsatisfiability check of an \exall formula, are decidable.
 	%which can
 	%be expressed as satisfiability of an $\exists^*\forall^*$ formula,
 	%are decidable.
 	
 	\begin{table}[t]
 		\centering
 		\caption{Complexity results for the satisfiability problem of HyperLTL}
 		\label{summary}
 		\begin{tabular}{|l|l|l|l|l|}
 			\hline
 			{\centering \ex}              & \all         & \exall  & \begin{tabular}[c]{@{}l@{}} bounded \\ \exall \end{tabular} & $\forall\exists$ \\ \hline
 			\begin{tabular}[c]{@{}l@{}}{PSPACE-}\\ {complete}\end{tabular} & \begin{tabular}[c]{@{}l@{}}{PSPACE-}\\ {complete}\end{tabular} & \begin{tabular}[c]{@{}l@{}}{EXPSPACE-}\\ {complete}\end{tabular} &\begin{tabular}[c]{@{}l@{}}{PSPACE-}\\ {complete}\end{tabular} & {undecidable} \\ \hline
 		\end{tabular}
 	\end{table}

 	\comment{
 		The reminder of the paper is structured as follows.
 		We define HyperLTL in Section~\ref{tracevshyper}.
 		In Section~\ref{alt-free-frag}, we show that the satisfiability problems of the alternation-free fragment of HyperLTL is PSPACE-complete. In Section~\ref{exall}, we provide an EXPSPACE-algorithm for deciding the satisfiability of the fragment with one quantifier alternation starting with an existential quantifier. We establish EXPSPACE-completeness via an encoding of an exponential space-bounded Turing machine into the \exall fragment of HyperLTL. We also study a bounded version of the satisfiability problem, which remains PSPACE-complete for this fragment.
 		As an application, we give an EXPSPACE-algorithm for deciding whether two universally quantified HyperLTL formulae are equivalent in Section~\ref{exall}.
 		We conclude the paper with the proof that the satisfiability problem remains undecidable for every other fragment of HyperLTL in Section~\ref{allex}.
 	} %%% end comment
 	
 	%	Due to space constraints, some proofs have been moved to Appendix~\ref{appendixA}.

 	\section{HyperLTL}
 	\label{tracevshyper}
 	%We model a reactive system as a set of its possible infinite execution traces.
 	Let $\mathit{AP}$ be a set of \emph{atomic propositions}.
 	A \emph{trace} $t$ is an infinite sequence over subsets of the atomic propositions. We define the set of traces $\mathit{TR} \coloneqq (2^\mathit{AP})^\omega$.
 	A subset $T \subseteq \mathit{TR}$ is called a \emph{trace property}.
 	We use the following notation to manipulate traces:
 	let $t \in \mathit{TR}$ be a trace and $i \in \mathbb{N}$ be a natural number. $t[i]$ denotes the $i$-th element of $t$. Therefore, $t[0]$ represents the starting element of the trace. Let $j \in \mathbb{N}$ and $j \geq i$. $t[i,j]$ denotes the sequence $t[i]~t[i+1]\ldots t[j-1]~t[j]$. $t[i, \infty]$ denotes the infinite suffix of $t$ starting at position $i$.
 	
 	\paragraph{LTL Syntax.}
 	Linear-time temporal logic (LTL)~\cite{DBLP:conf/focs/Pnueli77} combines the usual boolean connectives with temporal modalities such as the \emph{Next} operator $\LTLnext$ and the \emph{Until} operator $\LTLuntil$. The syntax of LTL is given by the following grammar:
 	
 	\begin{align*}
 	\varphi~& \Coloneqq~p~~|~~\neg \varphi~~|~~\varphi \vee \varphi~~|~~\LTLnext \varphi~~|~~\varphi\, \LTLuntil \varphi
 	\end{align*}
 	where $p \in \mathit{AP}$ is an atomic proposition.
 	$\LTLnext \varphi$ means that $\varphi$ holds in the \emph{next} position of a trace; $\varphi_1 \LTLuntil \varphi_2$ means that $\varphi_1$ holds \emph{until} $\varphi_2$ holds.
 	There are several derived operators, such as
 	$\LTLdiamond \varphi \equiv \mathit{true} \LTLuntil \varphi$,
 	$\LTLsquare \varphi \equiv \neg \LTLdiamond \neg \varphi$, and $\varphi_1\,\mathcal\, W \varphi_2 \equiv (\varphi_1 \LTLuntil \varphi_2) \vee \LTLsquare \varphi_1$.  
 	$\LTLdiamond \varphi$ states that $\varphi$ will \emph{eventually} hold in the future and $\LTLsquare$ states that $\varphi$ holds \emph{globally}; $\mathcal W$ is the \emph{weak} version of the \emph{until} operator. 
 	%The boolean connectives $\rightarrow$, $\wedge$ and $\leftrightarrow$ are defined in the usual way.
 	
 	\paragraph{LTL Semantics.} Let $p \in \mathit{AP}$ and $t \in \mathit{TR}$.
 	The semantics of an LTL formula is defined as the smallest relation $\models$ that satisfies the following conditions:
 	\begin{align*}
 	& t \models p &&~\text{iff} \hspace{5ex} p \in t[0] \\
 	& t \models \neg \psi &&~\text{iff} \hspace{5ex} t \not \models \psi \\
 	& t \models \psi_1 \vee \psi_2 &&~\text{iff} \hspace{5ex} t \models \psi_1~\text{or}~t \models \psi_2 \\
 	& t \models \LTLnext \psi &&~\text{iff} \hspace{5ex} t [1,\infty] \models \psi \\
 	& t \models \psi_1 \LTLuntil \psi_2 &&~\text{iff}\hspace{5ex} \text{there exists}~i \geq 0 : t[i,\infty] \models \psi_2 \\
 	& &&\hspace{7.8ex} \text{and for all}~0 \leq j < i~\text{we have}~t[j,\infty] \models \psi_1
 	\end{align*}
 	\emph{LTL-SAT} is the problem of deciding whether there exists a trace $t \in \mathit{TR}$ such that $t \models \psi$.
 	\begin{theorem}
 		\label{t1}
 		LTL-SAT is PSPACE-complete~\cite{DBLP:journals/jacm/SistlaC85}.
 	\end{theorem}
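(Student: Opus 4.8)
The plan is to prove the two directions separately. For \emph{membership in PSPACE} I would use the automata-theoretic method: translate the given LTL formula $\psi$ into a generalized B\"uchi automaton $\mathcal{A}_\psi$ over the alphabet $2^{\mathit{AP}}$ whose states are the Hintikka-style \emph{atoms} of $\psi$, i.e.\ the maximal subsets of the closure $\mathit{cl}(\psi)$ (the subformulas of $\psi$ together with their negations) that are propositionally consistent and respect the one-step unfolding $\varphi_1 \LTLuntil \varphi_2 \equiv \varphi_2 \vee (\varphi_1 \wedge \LTLnext(\varphi_1 \LTLuntil \varphi_2))$; the $\LTLnext$-formulas contained in an atom determine its successors, and the acceptance sets forbid runs that keep promising an until without ever discharging it. A routine induction shows $t \models \psi$ iff $t$ labels an accepting run of $\mathcal{A}_\psi$, so $\psi$ is satisfiable iff $L(\mathcal{A}_\psi) \neq \emptyset$. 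Since $\mathcal{A}_\psi$ has $2^{O(|\psi|)}$ states, one does not build it explicitly but decides nonemptiness on the fly, nondeterministically guessing a lasso from the initial atom to and around a cycle meeting every acceptance set, while storing only a constant number of atoms and a polynomially bounded step counter; this is an NPSPACE procedure, and NPSPACE $=$ PSPACE by Savitch's theorem.

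For \emph{PSPACE-hardness} I would reduce from the word problem of a Turing machine $M$ that runs in space $p(n)$ on inputs of length $n$ (equivalently, one could reduce from QBF). Encode an accepting run of $M$ on input $w$, $|w| = n$, as a single trace in which each configuration occupies a block of $p(n)$ consecutive positions, one per tape cell, each position carrying atomic propositions for that cell's symbol and, for the cell under the head, the current control state. An LTL formula $\psi_{M,w}$ of size polynomial in $n$ then asserts: (i) the first block is the initial configuration on $w$ — a conjunction over its $p(n)$ positions, written with nested $\LTLnext$'s; (ii) globally, the window relating each position to the one $p(n)$ steps later, reached by $\LTLnext^{p(n)}$, is consistent with $M$'s transition relation — a constant-size local disjunction over legal cell triples, with separate clauses for head and non-head cells; and (iii) eventually some block equals an accepting configuration. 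Then $M$ accepts $w$ iff $\psi_{M,w}$ is satisfiable, which establishes PSPACE-hardness.

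The step that needs the most care is the hardness encoding: one must pin down each configuration block to length exactly $p(n)$ with exactly one head-marked cell, and verify that conjoining the local constraints under a single $\LTLsquare$ really does capture $M$'s transitions at block boundaries, including the cells flanking the head that may be overwritten. The observation that keeps $\psi_{M,w}$ polynomial is that moving from a cell to the corresponding cell of the successor configuration is just $\LTLnext^{p(n)}$, whose length is linear — not exponential — in $p(n)$, so no explicit block counter is needed. On the membership side the only subtlety is algorithmic, namely never materializing the exponential-size automaton, which Savitch's theorem resolves. Both directions are classical; the full argument appears in~\cite{DBLP:journals/jacm/SistlaC85} and~\cite{DBLP:journals/iandc/VardiW94}.
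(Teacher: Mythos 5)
The paper does not actually prove this theorem: it is the classical result of Sistla and Clarke, imported by citation, so there is no in-paper argument to compare your proposal against. Your sketch is the standard proof of exactly that result (on-the-fly emptiness checking of the exponential-size atom automaton plus Savitch for membership, a polynomial-space Turing-machine encoding with $\LTLnext^{p(n)}$ linking consecutive configuration blocks for hardness) and is essentially correct; the one wording to repair is the \emph{polynomially bounded step counter} in the emptiness check --- the lasso may need exponentially many steps, so the counter's value must range up to the exponential number of atoms, and only its representation (polynomially many bits) is polynomially bounded.
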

 	
 	%A system model $S \subseteq \mathit{TR}$ satisfies a trace property $T$, if $S \subseteq T$, stating that every trace through a system fulfils the trace property.
 	%ltl implicitly reasons over a single execution trace, stating that all traces through a system fulfil the desired trace property.
 	%LTL is unable to refer to multiple traces \chahn{at the same time?} and, hence, cannot express interesting information flow policies.
 	%Consider the example of observational determinism, stating that the system behaves deterministically for a \emph{low-security} user.
 	%For expressing this property, we have to explicitly grasp a pair of execution traces and set them into relation, which LTL (and CTL as well as CTL*~\chahn{cite}) are unable to do.
 	%As an extension to trace properties, a hyperproperty is able to define relations between certain traces of a system.
 	
 	%A \emph{hyperproperty} is set of sets of traces, i.e., a set $H \subseteq 2^\mathit{TR}$.  is called a hyperproperty. Many interesting information flow policies can be articulated as a hyperproperty, e.g., already mentioned noninterference or observational determinism.
 	%A system model $S \subseteq \mathit{TR}$ satisfies a hyperproperty if $S \in 2^\mathit{TR}$.
 	%HyperLTL is a recently introduced extension of LTL, which closes the gap of expressiveness by quantifying over traces and, hence, relating traces to each other.
 	
 	\paragraph{HyperLTL Syntax.}
 	HyperLTL~\cite{DBLP:conf/post/ClarksonFKMRS14} extends LTL with trace variables and trace quantifiers. Let $\mathcal{V}$ be an infinite supply of trace variables.  The syntax of HyperLTL is given by the following grammar:
 	\begin{align*}
 	\psi~&\Coloneqq~\exists \pi.\;\psi~~|~~\forall\pi.\;\psi~~|~~\varphi \\
 	\varphi~&\Coloneqq~a_{\pi}~~|~~\neg \varphi~~|~~\varphi \vee \varphi~~|~~\LTLnext \varphi~~|~~\varphi\, \LTLuntil \varphi 
 	\end{align*}
 	where $a\in \mathit{AP}$ is an atomic proposition and $\pi \in \mathcal V$ is a trace variable. Note that atomic propositions are indexed by trace variables.
 	The quantification over traces makes it possible to express properties like ``on all traces $\psi$ must hold'', which is expressed by $\forall \pi.~\psi$. Dually, one can express that ``there exists a trace such that $\psi$ holds'', which is denoted by $\exists \pi.~\psi$. The derived operators $\LTLdiamond$, $\LTLsquare$, and $\mathcal W$ are defined as for LTL.
 	
 	\paragraph{HyperLTL Semantics.}
 	A HyperLTL formula defines a \emph{hyperproperty}, i.e., a set of sets of traces. A set $T$ of traces satisfies the hyperproperty if it is an element of this set of sets. 
 	Formally, the semantics of HyperLTL formulas is given with respect to a \emph{trace assignment} $\Pi$ from $\mathcal{V}$ to $\mathit{TR}$, i.e. a partial function mapping trace variables to actual traces. $\Pi[\pi \mapsto t]$ denotes that $\pi$ is mapped to $t$, with everything else mapped according to $\Pi$. $\Pi[i,\infty]$ denotes the trace assignment that is equal to $\Pi(\pi)[i,\infty]$ for all $\pi$.
 	\begin{align*}
 	&\Pi \models_T~\exists \pi. \psi &&\text{iff}\hspace{5ex} \text{there exists}~t \in T~:~ \Pi[\pi \mapsto t] \models_T \psi \\
 	&\Pi \models_T~\forall \pi. \psi &&\text{iff}\hspace{5ex} \text{for all}~t \in T~:~ \Pi[\pi \mapsto t] \models_T \psi \\
 	&\Pi \models_T~a_{\pi} &&\text{iff}\hspace{5ex} a \in \Pi(\pi)[0] \\
 	&\Pi \models_T~\neg \psi &&\text{iff}\hspace{5ex} \Pi \not \models_T \psi \\
 	&\Pi \models_T~\psi_1 \vee \psi_2 &&\text{iff}\hspace{5ex} \Pi \models_T \psi_1~\text{or}~\Pi \models_T \psi_2 \\
 	&\Pi \models_T~\LTLnext \psi &&\text{iff}\hspace{5ex} \Pi[1,\infty] \models_T \psi \\
 	&\Pi \models_T~\psi_1 \LTLuntil \psi_2 &&\text{iff}\hspace{5ex} \text{there exists}~i \geq 0 : \Pi[i,\infty] \models_T \psi_2 \\
 	& &&\hspace{7.2ex} \text{and for all}~0 \leq j < i~\text{we have}~\Pi[j,\infty] \models_T \psi_1
 	\end{align*}
 	\emph{HyperLTL-SAT} is the problem of deciding whether there exists a \emph{non-empty} set of traces $T$ such that $\Pi \models_T \psi$, where $\Pi$ is the empty trace assignment and $\models_T$ is the smallest relation satisfying the conditions above.
 	If it is clear from the context, we omit $\Pi$ and simply write $T \models \psi$.
 	If $\models_T \psi$, we call $T$ a \emph{model} of $\psi$.
 	%\end{definition}
 	
 	%\begin{example}
 	%	\label{first_hyper_example}
 	%	Given the set of atomic propositions $\{a, b\}$.
 	%	Consider the following HyperLTL formula, stating that for all traces $\pi$ there must exist a trace $\pi'$, s.t. for every occurrence in $\pi$ of the atomic proposition $a$, $b$ must hold at the next position in $\pi'$.
 	%	\begin{align*}
 	%	\forall \pi \exists \pi'.~\LTLsquare ( a_\pi \rightarrow \LTLnext b_{\pi'})
 	%	\end{align*}
 	%	The following trace set $\{p_1,p_2\}$ \emph{does not} satisfy the formula, since there exists no trace s.t. $b$ holds at the third position, which is required by the formula, since $a$ holds at the second position in $p_1$.
 	%	\begin{align*}
 	%	p_1:~&\{\}\{a\}(\{\})^\omega \\
 	%	p_2:~&\{b\}\{b\}\{\}(\{b\})^\omega
 	%	\end{align*}
 	%	There are infinitely many solutions for a trace set that satisfies the formula. A possible one is to replace $p_2$ with $\{b\}^\omega$.
 	%	Another trace set satisfying the formula would be the singleton $\{p\}$, with $p$ being the following trace.
 	%	\begin{align*}
 	%	p: &(\{a,b\})^\omega
 	%	\end{align*}
 	%\end{example}
 	
 	\section{Alternation-free HyperLTL}
 	\label{alt-free-frag}
 	We begin with the satisfiability problem for the alternation-free fragments of HyperLTL.
 	We call a HyperLTL formula $\psi$ (quantifier) \emph{alternation-free} iff the quantifier prefix only consists of  either only universal or only existential quantifiers. We denote the corresponding fragments as the $\forall^*$ and $\exists^*$ fragments, respectively. For both fragments, we show that every formula can be reduced, as discussed in the introduction, to an equisatisfiable LTL formula of the same size. As a result, we obtain that the satisfiability problem of alternation-free HyperLTL is PSPACE-complete, like the satisfiability problem of LTL.

 	\subsection{The \boldmath{$\forall^*$} Fragment}
 	
 	The \all fragment is particularly easy to decide, because we can restrict the models, without loss of generality, to singleton sets of traces: since all quantifiers are universal, every model with more than one trace could immediately be translated into another one where every trace except one is omitted.
 	Hence, we can ignore the trace variables and interpret the HyperLTL formula as a plain LTL formula.
 	%Given a HyperLTL formula $Q.\psi$, where $Q$ is an arbitrary quantifier prefix,
 	%we define the LTL formula $\psi^{-l}$ as $\psi$ without the trace variables.
 	%BRAUCHEN WIR ERST IM APPENDIX
 	
 	\begin{example}
 		Consider the following HyperLTL formula with atomic propositions $\{a,b\}$:
 		\begin{align*}
 		\forall \pi_1 \forall \pi_2.\;\LTLsquare b_{\pi_1} \wedge \LTLsquare \neg b_{\pi_2}
 		\end{align*}
 		Since the trace variables are universally quantified, we are reasoning about \emph{every} pair of traces, and thus in particular about the pairs where both variables refer to the same trace. It is, therefore, sufficient to check the satisfiability of the LTL formula $\LTLsquare b \wedge \LTLsquare \neg b$, which turns out to be unsatisfiable.
 	\end{example}
 	
 	The satisfiability of hyperproperties that can be expressed in the \all fragment, such as observational determinism, thus immediately reduces to LTL satisfiability.
 	
 	\begin{definition}
 		\label{phi_l}
 		Given a HyperLTL formula $Q.\varphi$, where $Q$ is an arbitrary quantifier prefix. We delete every trace variable in $\varphi$ and omit the quantifier prefix $Q$. We denote the resulting LTL formula by $\varphi^{-l}$.
 	\end{definition}
 	
 	We prove that the previous definition preserves satisfiability with the following lemma. It states that, for a given \all HyperLTL formula, every trace of a model does also satisfy the LTL formula obtained by Definition~\ref{phi_l}.
 	
 	\begin{lemma}
 		\label{choice}
 		For every \all HyperLTL formula, the following holds:
 		\begin{align*}
 		\forall~T \in 2^\mathit{TR}. ((T \models \forall \pi_1 \ldots \forall \pi_n.~\varphi) \rightarrow \forall t \in T.~ t \models \varphi^{-l})
 		\end{align*}
 		%Stating that if there is a trace set $T$ satisfying a \all HyperLTL formula, then every element of this set is a solution for the resulting LTL formula if one deletes the labels from the atomic propositions and discards the quantifier prefix.
 	\end{lemma}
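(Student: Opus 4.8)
The plan is to factor the statement into two parts. First, the ``diagonal instantiation'' observation: if $T \models \forall \pi_1 \ldots \forall \pi_n.\;\varphi$ and $t \in T$, then the trace assignment $\Pi_t$ that maps every $\pi_i$ to $t$ satisfies $\Pi_t \models_T \varphi$. This follows by applying the semantic clause for the universal quantifier $n$ times, each time using the witness $t \in T$. Second, I would connect the HyperLTL satisfaction of the quantifier-free body $\varphi$ under this diagonal assignment with the LTL satisfaction of $\varphi^{-l}$ by $t$. The lemma then follows immediately: if $T = \emptyset$ the conclusion is vacuous, and otherwise, for any $t \in T$, the first part gives $\Pi_t \models_T \varphi$ and the second part gives $t \models \varphi^{-l}$.

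For the second part I would prove, by structural induction on the quantifier-free formula $\varphi$, the following auxiliary statement: for every set of traces $T$, every trace $t$, and every trace assignment $\Pi$ with $\Pi(\pi) = t$ for all trace variables occurring in $\varphi$, we have $\Pi \models_T \varphi$ iff $t \models \varphi^{-l}$. Note that $\models_T$ does not actually depend on $T$ here, since $\varphi$ has no quantifiers; $T$ is carried along only for uniformity with the HyperLTL semantics. The atomic case $\varphi = a_\pi$ holds because $\Pi(\pi)[0] = t[0]$, so $a \in \Pi(\pi)[0]$ iff $t \models a = (a_\pi)^{-l}$; the Boolean cases $\neg\psi$ and $\psi_1 \vee \psi_2$ are immediate from the induction hypothesis together with the fact that $(\cdot)^{-l}$ commutes with the connectives (clear from Definition~\ref{phi_l}). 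For $\LTLnext \psi$, the assignment $\Pi[1,\infty]$ maps every variable of $\psi$ to $t[1,\infty]$, so the induction hypothesis, applied to the trace $t[1,\infty]$, yields $\Pi[1,\infty] \models_T \psi$ iff $t[1,\infty] \models \psi^{-l}$, which is exactly $t \models \LTLnext \psi^{-l} = (\LTLnext\psi)^{-l}$; the $\LTLuntil$ case is analogous, using that $\Pi[i,\infty]$ maps every variable to $t[i,\infty]$ for every $i \geq 0$.

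I do not expect a genuine obstacle here. The one point that requires care is the \emph{formulation} of the induction hypothesis: it must quantify over an arbitrary trace $t$ (not merely over the traces of the original model), since the temporal operators shift the assignment to suffixes $t[i,\infty]$; if $t$ ranged only over $T$ this would not even type-check. With that strengthening in place, every case of the induction is a one-line unfolding of the two semantic definitions side by side.
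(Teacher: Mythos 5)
Your proof is correct and follows essentially the same route as the paper's: handle the empty $T$ vacuously, instantiate all universal quantifiers with the same trace $t$ (the diagonal assignment), and then transfer satisfaction of the quantifier-free body to LTL satisfaction of $\varphi^{-l}$. The only difference is that the paper treats the last transfer step (``hence $t$ satisfies $\varphi^{-l}$'') as immediate, whereas you make it rigorous via a structural induction with the correctly strengthened hypothesis over arbitrary traces; that is a faithful elaboration, not a different approach.
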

 	
 	\begin{proof}
 		Let a \all HyperLTL formula $\forall \pi_1 \ldots \forall \pi_n.~\varphi$ over a set of atomic propositions $\mathit{AP}$ be given. Let $T \in 2^\mathit{TR}$ be an arbitrary trace set satisfying $\forall \pi_1 \ldots \pi_n.~\varphi$. We distinguish two cases. If $T$ is the empty set, then the statement trivially holds.
 		Otherwise, let $t \in T$ be arbitrary. By assumption the singleton $\{t\}$ satisfies $\forall \pi_1 \ldots \forall \pi_n.~\varphi$ by assigning $t$ to every trace variable $\pi_1, \ldots, \pi_n$. Hence $t$ satisfies $\varphi^{-l}$. \qed
 		%Assume by contradiction that $s$ does not satisfy $\phi^{-l}$, stating that there exists at least one position $i \in \mathbb{N}$ in $s$, s.t. a \emph{desired} atomic proposition $x \notin s[i]$.
 		%This is a contradiction to the assumption that $T \models \forall \pi_0 ... \pi_n.~\phi$, since the atomic proposition $x$ is labelled with $\pi_0 \vee ... \vee \pi_n$ in the origin HyperLTL formula. $\pi_0 ... \pi_n$ are all bound by an universal quantifier, therefore $x$ must hold for every trace at the desired positions, especially for $s$.
 	\end{proof}
 	
 	\begin{lemma}
 		\label{lemmaall}
 		For every \all HyperLTL formula there exists an equisatisfiable LTL formula of the same size.
 	\end{lemma}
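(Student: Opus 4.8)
The plan is to prove that $\forall \pi_1 \ldots \forall \pi_n.~\varphi$ and $\varphi^{-l}$ (Definition~\ref{phi_l}) are equisatisfiable; since $\varphi^{-l}$ is obtained from $\varphi$ merely by erasing the trace-variable subscripts and dropping the quantifier prefix, it is an LTL formula of the same size, which is exactly what the lemma asks for. So the whole content is the equisatisfiability.

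The ``only if'' direction is immediate from Lemma~\ref{choice}: if $T$ is a model of $\forall \pi_1 \ldots \forall \pi_n.~\varphi$, then $T$ is non-empty by the definition of HyperLTL-SAT, so we may pick any $t \in T$; Lemma~\ref{choice} gives $t \models \varphi^{-l}$, hence $\varphi^{-l}$ is LTL-satisfiable.

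For the ``if'' direction, suppose $t \models \varphi^{-l}$ for some trace $t$, and take the singleton trace set $T = \{t\}$. The only trace assignment into $T$ maps every variable to $t$, so it suffices to prove the following auxiliary claim by structural induction on the quantifier-free formula $\varphi$: for every trace assignment $\Pi$ with $\Pi(\pi) = t$ for all trace variables $\pi$ occurring in $\varphi$, we have $\Pi \models_{\{t\}} \varphi$ iff $t \models \varphi^{-l}$. The base case $a_\pi$ holds since $\Pi \models_{\{t\}} a_\pi$ iff $a \in \Pi(\pi)[0] = t[0]$ iff $t \models a$, and $(a_\pi)^{-l} = a$. The boolean cases are routine. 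For the temporal operators, the key observation is that the HyperLTL semantics shifts the \emph{entire} trace assignment uniformly: $\Pi[i,\infty]$ maps every $\pi$ to $t[i,\infty]$, so the invariant ``all variables map to the same trace'' is preserved along the shifted assignments and the induction hypothesis applies to $\Pi[i,\infty]$ together with the trace $t[i,\infty]$. Thus $\Pi \models_{\{t\}} \LTLnext \psi$ iff $t[1,\infty] \models \psi^{-l}$ iff $t \models \LTLnext\,\psi^{-l}$, and analogously for $\LTLuntil$. Applying the claim to the assignment mapping each $\pi_i$ to $t$ yields $\{t\} \models \varphi$, hence $\{t\} \models \forall \pi_1 \ldots \forall \pi_n.~\varphi$, so the HyperLTL formula is satisfiable.

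I do not expect a genuine obstacle. The only points that need care are making the ``uniform shift'' observation explicit so that the temporal cases of the induction go through, and remarking that the syntactic transformation of Definition~\ref{phi_l} changes neither the number of boolean connectives nor the number of temporal operators, so $\varphi^{-l}$ has the same size as $\varphi$.
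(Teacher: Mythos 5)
Your proof is correct and follows essentially the same route as the paper: the forward direction via Lemma~\ref{choice}, and the backward direction by taking the singleton model $\{t\}$. The only difference is that you spell out, via structural induction with the uniform-shift observation, the step that the paper states in one line (``since $t$ is assigned to every universally quantified trace variable''), which is a reasonable elaboration rather than a different approach.
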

 	
 	\begin{proof}
 		Let a \all HyperLTL formula $\forall \pi_1 \ldots \forall \pi_n.~\varphi$ and its corresponding LTL formula $\varphi^{-l}$ be given.
 		\begin{itemize}
 			\item
 			Assume $\forall \pi_1 \ldots \forall \pi_n.~\varphi$ is satisfiable by a trace set $T$. Choose an arbitrary trace $t \in T$. By Lemma~\ref{choice}, $t$ is a witness for the satisfiability of $\varphi^{-l}$, i.e., $\varphi^{-l}$ is satisfiable.
 			\item
 			Assume the LTL formula $\varphi^{-l}$ is satisfiable by a trace $t$. A trace set satisfying \linebreak $\forall \pi_1 \ldots \forall \pi_n.~\varphi$ is $\{t\}$, since $t$ is assigned to every universally quantified trace variable.
 		\end{itemize}
 		Therefore a \all HyperLTL formula $\forall \pi_1 \ldots \forall \pi_n.~\varphi$ is equisatisfiable to its corresponding LTL formula $\varphi^{-l}$. \qed
 	\end{proof}
 	
 	With this lemma, we provide proof that HyperLTL-SAT of \all formulas inherits the complexity of LTL-SAT, which is PSPACE-complete.
 	
 	\begin{lemma}
 		\label{complexity_forall_n}
 		HyperLTL-SAT is {PSPACE-complete} for the \all fragment.
 	\end{lemma}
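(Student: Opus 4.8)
The plan is to obtain both bounds directly from Lemma~\ref{lemmaall} together with Theorem~\ref{t1}, so the argument is short.

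For the upper bound, I would establish membership in PSPACE as follows. Given a \all HyperLTL formula $\forall \pi_1 \ldots \forall \pi_n.\,\varphi$, compute $\varphi^{-l}$ as in Definition~\ref{phi_l}; this is a purely syntactic transformation (erase the quantifier prefix and all trace-variable subscripts) that runs in linear time and yields an LTL formula with $|\varphi^{-l}| = |\varphi|$. By Lemma~\ref{lemmaall} the input is satisfiable iff $\varphi^{-l}$ is, so running a PSPACE decision procedure for LTL-SAT (Theorem~\ref{t1}) on $\varphi^{-l}$ decides the HyperLTL instance in polynomial space.

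For the lower bound, I would reduce LTL-SAT to HyperLTL-SAT restricted to the \all fragment. Given an LTL formula $\psi$, pick a single fresh trace variable $\pi$, let $\psi_\pi$ be the formula obtained from $\psi$ by replacing each atomic proposition $a$ by $a_\pi$, and output $\forall \pi.\,\psi_\pi$. This map is computable in linear time and its image lies in the \all fragment. Since $(\forall\pi.\,\psi_\pi)^{-l}$ is literally $\psi$, Lemma~\ref{lemmaall} gives that $\forall\pi.\,\psi_\pi$ is satisfiable iff $\psi$ is; in particular the model exhibited in that lemma is the non-empty singleton $\{t\}$, so the non-emptiness requirement in the definition of HyperLTL-SAT is met. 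As LTL-SAT is PSPACE-hard, HyperLTL-SAT for the \all fragment is PSPACE-hard as well, and combining the two bounds yields PSPACE-completeness.

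I do not expect a genuine obstacle here: every ingredient is already in place. The only points worth double-checking are that both directions of the translation are polynomial (they are in fact linear in the size of the formula) and that the non-emptiness side condition on models of HyperLTL-SAT is not silently violated; both are immediate from the constructions used in Lemma~\ref{lemmaall}.
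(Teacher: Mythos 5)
Your proposal is correct and follows essentially the same route as the paper: membership via the erasure $\varphi^{-l}$ (Definition~\ref{phi_l}), equisatisfiability from Lemma~\ref{lemmaall}, and the LTL-SAT oracle of Theorem~\ref{t1}; hardness via the reduction $\psi \mapsto \forall\pi.\,\psi_\pi$ with the singleton model. Your extra remark about the non-emptiness side condition is a welcome precision the paper leaves implicit, but it does not change the argument.
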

 	
 	\begin{proof}
 		We use Definition~\ref{phi_l} to transform an arbitrary \all HyperLTL formula to an LTL formula in polynomial space, which is, by Lemma~\ref{lemmaall}, equisatisfiable to the original HyperLTL formula. LTL-SAT is, by Theorem~\ref{t1}, in {PSPACE} and thus \all HyperLTL-SAT is in {PSPACE} too.\\
 		We prove hardness with a reduction from LTL-SAT, which is, by Theorem~\ref{t1}, PSPACE-hard. An LTL formula $\varphi$ is reduced to a HyperLTL formula $\forall \pi.~\varphi(\pi)$, where $\varphi(\pi)$ denotes that every atomic proposition occurring in $\varphi$ is labelled with $\pi$.
 		The resulting formula $\forall \pi.~\varphi(\pi)$ is satisfiable by the singleton $\{t\}$ if and only if $\varphi$ is satisfiable by $t$, for all $t \in \mathit{TR}$.
 		Hence, \all HyperLTL-SAT is {PSPACE-complete}. \qed
 	\end{proof}
 	
 	\subsection{The \boldmath{$\exists^*$} Fragment}
 	\label{secex}
 	
 	A model of a formula in the \ex fragment may, in general, have more than one trace. For example the models of $\exists \pi_1 \exists \pi_2.~ a_{\pi_1} \wedge \neg a_{\pi_2}$ have (at least) two traces. In order to reduce HyperLTL satisfiability again to LTL satisfiability, we \emph{zip} such traces together.
 	For this purpose, we introduce a fresh atomic proposition for every atomic proposition $a$ and every path variable $\pi$ that occur as an indexed proposition $a_\pi$ in the formula. We obtain an equisatisfiable LTL formula by removing the quantifier prefix and replacing every occurrence of $a_\pi$ with the new proposition.
 	
 	\begin{example}
 		Consider the following HyperLTL formula over the atomic propositions ${\{a,b\}}$:
 		\begin{align*}
 		\exists \pi_1 \exists \pi_2.&~a_{\pi_1} \wedge \LTLsquare \neg b_{\pi_1} \wedge \LTLsquare b_{\pi_2}
 		\end{align*}
 		By discarding the quantifier prefix and replacing the indexed propositions with fresh propositions, we obtain the equisatisfiable LTL formula over the atomic propositions $\{a_1, b_1,b_2\}$:
 		\begin{align*}
 		a_1 \wedge \LTLsquare \neg b_1 \wedge \LTLsquare b_2
 		\end{align*}
 		The LTL formula is satisfied by the trace $\tilde{p}$: $(\{a_1,b_2\})^\omega$.
 		We can map the fresh propositions back to the original indexed propositions. In this way, we obtain witnesses for $\pi_1$ and $\pi_2$ by splitting $\tilde{p}$ into two traces $\{a\}^\omega$ and $\{b\}^\omega$, where for every position in these traces only those atomic propositions that were labelled with $\pi_1$ or $\pi_2$, respectively, hold. Hence, the trace set satisfying the HyperLTL formula is $\{\{a\}^\omega,\{b\}^\omega\}$.
 	\end{example}
 	
 	We formally define a mapping that replaces atomic propositions, which are labelled with trace variables, with fresh atomic propositions, that contain the information, which is lost by discarding the quantifier prefix. For example, $a_{\pi_m}$ will be replaced with $a_m$.
 	\begin{definition}
 		\label{psi_exists}
 		%\todo{sinvollerer name statt $\psi_\exists$}
 		Let $\mathit{AP}$ be $\{a_1,\ldots, a_m\}$ and $\exists \pi_1 \ldots \exists \pi_n.~\varphi$ be an \ex HyperLTL formula.
 		We create an LTL formula $\varphi_\exists$ by replacing every trace-labelled atomic proposition $a_{1_{\pi_1}},\ldots,a_{1_{\pi_n}},\ldots,a_{m_{\pi_1}},\ldots,a_{m_{\pi_n}}$ with fresh atomic propositions of $\widetilde{\mathit{AP}}\coloneqq \bigcup^{m}_{j=1} \bigcup^{n}_{i=1} a_{j_i}$. Furthermore, we discard the quantifier prefix.
 		The substitution is represented by the mapping $s \in \mathit{AP} \times \{1,\ldots,n\} \rightarrow \widetilde{\mathit{AP}}$, where $\{1,\dots,n\}$ represents the trace variables. We will use $\varphi_\exists$ as notation for applying this construction.
 		%Note that it suffices that $|\widetilde{\mathit{AP}}| = n \cdot m$.
 	\end{definition}
 	
 	%For every HyperLTL formula $\varphi$, the resulting LTL formula of this construction will be denoted by $\varphi_\exists$.
 	After the replacement, we can construct a trace set by projecting every atomic proposition back to its corresponding trace, since
 	Definition~\ref{psi_exists} preserves the necessary information.
 	%we previously saved the necessary information with Construction~\ref{psi_exists}.
 	For example, consider a trace where $a_1$ and $a_{2}$ holds on the first position. The atomic proposition $a \in \mathit{AP}$ will be put in the first position of the first and the second trace, but no other.
 	
 	\begin{definition}
 		\label{pr}
 		Let $\mathit{AP}$ be a set of atomic propositions. Let $s \in \mathit{AP} \times \{1,\ldots, n\} \rightarrow \widetilde{\mathit{AP}}$ be a substitution that maps trace-labelled atomic propositions to fresh ones. Let $\tilde{t}$ be a trace that ranges over $\widetilde{\mathit{AP}}$ and $\varphi$ an \ex HyperLTL formula with $n$ quantifier. We also define $s(M,i)$ for a set $M \subseteq \mathit{AP}$ and $1 \leq i \leq n$ as $\{s(p,i)~|~ p \in M\}$. The projection $pr^s_i$ for the $i$th trace variable is defined as the trace $t$, s.t. $\forall j \in \mathbb{N}. t[j] = s(\tilde{t}[j],i)^{-1}$. This denotes that only the atomic propositions that correspond to the $i$th trace variable are kept in the corresponding trace. Therefore, $pr^s(\tilde{t}) :=$ $\bigcup_{i=1}^{n }~pr^s_i(\tilde{t})$ splits the trace $t$ into a trace set containing $n$ traces, each being a witness for its respective trace quantifier $\pi_i$.
 		%\chahn{Nochmal drüberschauen, geht glaub unkomplizierter und eleganter}
 	\end{definition}
 	
 	\begin{lemma}
 		\label{lemmaex}
 		For every \ex HyperLTL formula there exists an equisatisfiable LTL formula of the same size.
 	\end{lemma}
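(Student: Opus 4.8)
The plan is to show that the LTL formula $\varphi_\exists$ obtained in Definition~\ref{psi_exists} is the witness required by the lemma. That $\varphi_\exists$ is no larger than the given HyperLTL formula is immediate: the construction only relabels the atomic propositions at the leaves of the syntax tree (replacing each $a_{\pi_i}$ by $s(a,i)$) and erases the finite quantifier prefix, so $\varphi_\exists$ has exactly the same temporal/Boolean structure as the body $\varphi$, and in particular the same size. It remains to prove equisatisfiability, which I would establish via the two implications below, each driven by a single structural induction on the quantifier-free body $\varphi$.

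For the left-to-right direction, assume $T \models \exists \pi_1 \ldots \exists \pi_n.\,\varphi$. By the semantics of the existential quantifiers there are traces $t_1,\dots,t_n \in T$ such that $\Pi \models_T \varphi$ for the assignment $\Pi$ mapping each $\pi_i$ to $t_i$. I would \emph{zip} these traces into a single trace $\tilde t \in (2^{\widetilde{\mathit{AP}}})^\omega$ by declaring $s(a,i) \in \tilde t[k]$ iff $a \in t_i[k]$, for all $k \in \mathbb{N}$, $a \in \mathit{AP}$, and $1 \le i \le n$. Writing $\psi^s$ for the LTL formula obtained from a subformula $\psi$ of $\varphi$ by the substitution $s$, the key claim, proved by induction on $\psi$, is that
\[
  \Pi[k,\infty] \models_T \psi \iff \tilde t[k,\infty] \models \psi^s \qquad \text{for all } k \in \mathbb{N}.
\]
The atomic case $\psi = a_{\pi_i}$ is exactly the defining property of the zip; the Boolean cases are immediate; and the temporal cases ($\LTLnext$, $\LTLuntil$) go through because advancing $\Pi$ by one position advances all of $t_1,\dots,t_n$ in lockstep, which is precisely the effect of advancing the single trace $\tilde t$ on each of its coordinates. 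Instantiating at $k=0$ and $\psi = \varphi$ yields $\tilde t \models \varphi_\exists$.

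For the right-to-left direction, assume some trace $\tilde t$ over $\widetilde{\mathit{AP}}$ satisfies $\varphi_\exists$. Let $t_i := pr^s_i(\tilde t)$ be the projections of Definition~\ref{pr}, put $T := pr^s(\tilde t) = \{t_1,\dots,t_n\}$, and let $\Pi$ map each $\pi_i$ to $t_i$. Since $T$ contains $t_1$, it is non-empty, hence a legitimate candidate model for HyperLTL-SAT. By Definition~\ref{pr}, $a \in t_i[k]$ iff $s(a,i) \in \tilde t[k]$, so the coordinate-wise relationship between $\tilde t$ and $t_1,\dots,t_n$ is exactly the one used in the first direction; therefore the displayed equivalence holds again, and its instance at $k=0$, $\psi = \varphi$ gives $\Pi \models_T \varphi$. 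Prefixing the $n$ existential quantifiers and witnessing each $\pi_i$ by $t_i \in T$ yields $T \models \exists \pi_1 \ldots \exists \pi_n.\,\varphi$.

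The main obstacle is formulating the induction invariant correctly: it must keep the position offset $k$ explicit, so that the $\LTLnext$ and $\LTLuntil$ cases have a smaller instance to appeal to, and it must range over all subformulas of $\varphi$, not over $\varphi$ alone. Once it is phrased this way, every case is routine, the two directions become essentially symmetric, and the single content-bearing observation is that ``zip'' and ``projection'' are mutually inverse at the level of individual positions, which is what makes the atomic base case line up in both directions.
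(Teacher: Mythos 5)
Your proposal is correct and follows essentially the same route as the paper: both directions use the same zip construction $\tilde t[j] = \bigcup_i s(t_i[j],i)$ for one direction and the projection $pr^s$ of Definition~\ref{pr} for the other. The only difference is that you make explicit the structural induction (with the position offset $k$) that the paper leaves implicit behind the phrase \enquote{$\tilde t$ is exactly the trace that satisfies $\varphi_\exists$ by Definition~\ref{psi_exists}}, which is a welcome tightening rather than a different argument.
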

 	
 	\begin{proof}Let $AP$ be a set of atomic propositions. Let $s \in AP \times \{1,\ldots,n\} \rightarrow \widetilde{\mathit{AP}}$ be as defined in Definition~\ref{psi_exists}.
 		\begin{itemize}
 			\item
 			Let $\exists \pi_1 \ldots \exists \pi_n.~\varphi$ be an \ex HyperLTL formula. Assume there exists a model $T \subseteq \mathit{TR}$, s.t. $T \models \exists \pi_1, \ldots, \pi_n.~\varphi$. 
 			%We can assume without loss of generality that $|T|=n$. Assume by contradiction that there exists no trace, s.t. $\psi_\exists$ is satisfiable, which means that $\forall t \in \mathit{TR}.~ t \models \neg \psi_\exists$.
 			Let $t_1,\ldots,t_n \in T$ be the witnesses of $\pi_1,\ldots, \pi_n$.
 			We define $\tilde{t}$ as the trace that is generated by zipping the traces $t_1,\ldots,t_n \in T$ together, which means that $\forall j \in \mathbb{N}.~\tilde{t}[j]:= \bigcup^n_{i=1} s(t_i[j],i)$. Since $\tilde{t}$ is exactly the trace that satisfies $\varphi_\exists$ by Definition~\ref{psi_exists}, $\tilde{t} \models \varphi_\exists$ must hold. This means we found a witness for satisfiability, namely $\tilde{t}$.
 			\item
 			Let $\exists \pi_1 \ldots \exists \pi_n.~\varphi$ be an \ex HyperLTL formula and $\varphi_\exists$ be the corresponding LTL formula from Definition~\ref{psi_exists}.
 			Assume $\tilde{t}$ satisfies $\varphi_\exists$. We show that we can find witnesses $t_1,\ldots,t_n$, s.t. $\{t_1,\ldots,t_n\}\models \exists \pi_1 \ldots \exists \pi_n.~\varphi$.
 			We use Definition~\ref{pr} to construct the desired trace set $\{t_1,\ldots,t_n\}$, which is $pr^s(\tilde{t})$.
 			Let $k$ be an arbitrary position in $\tilde{t}$, $i$ be an arbitrary quantifier index, and $a$ be an arbitrary atomic proposition, where $s(a,i) = a'$ and, therefore, $a' \in \widetilde{\mathit{AP}}$.
 			We distinguish two cases.
 			\begin{itemize}
 				\item
 				$a' \in \tilde{t}[k]$:
 				Assume $a'$ holds at position $k$ in $\tilde{t}$. The projection $pr^s$ simply projects $a'$ to $a$ at position $k$ of the witness of the $i$th quantifier, i.e., $a \in t_i[k]$.
 				\item
 				$a' \not\in \tilde{t}[k]$:
 				If $a'$ does not hold at position $k$, $a'$ is not projected to position $k$ in $p_i$, i.e., $a \not\in t_i[k]$.
 			\end{itemize}
 		\end{itemize}
 		%Satisfiability is therefore preserved by the definition of the projection $pr^s$~\ref{pr}.
 		Satisfiability is, therefore, preserved by the projection $pr^s$ of Definition~\ref{pr}.
 		%Assume for all trace sets $T$, $T \in 2^\mathit{AP} \not \models \exists \pi_1 ... \pi_n.~\psi$ holds. Assume by contradiction that there exists a trace, s.t. $\psi_\exists$ is satisfiable, which means that $\exists t \in \mathit{TR}.~ t \models \psi_\exists$. Let $\tilde{t}$ be that trace satisfying $\psi_\exists$ and $\widetilde{T}$ the resulting trace set of $pr_s(\tilde{t})$. Because of the definition of the given mapping $s$ (and therefore $s^{-1}$ respectively), it projects \emph{exactly} the \emph{necessary} atomic propositions assigned to the trace variables out, constructing the trace set $\widetilde{T}$, s.t. $\widetilde{T} \models \exists \pi_1,...,\exists \pi_n.~\psi$. This is a contradiction to the assumption so there can't exist a trace $t$, s.t. $t \models \psi_\exists$.
 		With Definition~\ref{psi_exists} above we have shown that $\exists \pi_1 \ldots \exists \pi_n.~\varphi$ and $\varphi_\exists$ are equisatisfiable. \qed
 	\end{proof}
 	
 	Using Lemma~\ref{lemmaex}, we prove the following lemma and conclude with Lemma~\ref{lemmaall} that HyperLTL-SAT inherits the complexity of LTL-SAT for the alternation-free fragment.
 	
 	\begin{lemma}
 		\label{complexity_exists_n}
 		HyperLTL-SAT is {PSPACE-complete} for the \ex fragment.
 	\end{lemma}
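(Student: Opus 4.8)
The plan is to mirror the proof of Lemma~\ref{complexity_forall_n}, now using the \ex-specific construction of Definition~\ref{psi_exists} in place of Definition~\ref{phi_l}. The upper bound comes from reducing \ex HyperLTL-SAT to LTL-SAT via that construction and invoking Theorem~\ref{t1}; the lower bound comes from a trivial embedding of LTL into the \ex fragment.

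For membership in PSPACE, I would take an \ex HyperLTL formula $\exists \pi_1 \ldots \exists \pi_n.\,\varphi$ and compute the LTL formula $\varphi_\exists$ of Definition~\ref{psi_exists}: a single pass over $\varphi$ rewriting each indexed proposition $a_{\pi_i}$ to the fresh proposition $s(a,i)$ and discarding the quantifier prefix. This is computable in polynomial (indeed logarithmic) space, and, crucially, it does \emph{not} increase the size of the formula, since one atomic proposition is replaced by one atomic proposition. By Lemma~\ref{lemmaex}, $\varphi_\exists$ is equisatisfiable to the original formula, so \ex HyperLTL-SAT reduces to LTL-SAT, which lies in PSPACE by Theorem~\ref{t1}; composing a polynomial-space transducer with a PSPACE decision procedure stays in PSPACE.

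For PSPACE-hardness, I would reduce from LTL-SAT, which is PSPACE-hard by Theorem~\ref{t1}, exactly as in Lemma~\ref{complexity_forall_n} but with an existential quantifier: map an LTL formula $\varphi$ to the \ex HyperLTL formula $\exists \pi.\,\varphi(\pi)$, where $\varphi(\pi)$ labels every atomic proposition of $\varphi$ with the single trace variable $\pi$; this is a logspace transformation. The correctness claim is that $\varphi$ is satisfiable iff $\exists \pi.\,\varphi(\pi)$ is. The supporting observation, established by a routine induction on the structure of $\varphi$, is that for any trace $t$ and any assignment $\Pi$ we have $\Pi[\pi \mapsto t] \models_{\{t\}} \varphi(\pi)$ exactly when $t \models \varphi$ in the LTL sense: since $\varphi(\pi)$ mentions only $\pi$, its HyperLTL semantics on $\Pi[\pi \mapsto t]$ collapses to the LTL semantics on $\Pi(\pi) = t$, and the temporal operators are interpreted identically in both logics. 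Hence if $t \models \varphi$ then the singleton $\{t\}$ with $\pi \mapsto t$ is a (non-empty) model of $\exists \pi.\,\varphi(\pi)$, and conversely any non-empty model $T$ of $\exists \pi.\,\varphi(\pi)$ supplies a witness $t \in T$ with $t \models \varphi$. Putting the two parts together yields PSPACE-completeness; combined with Lemma~\ref{lemmaall} this gives PSPACE-completeness of HyperLTL-SAT on the entire alternation-free fragment.

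I do not expect a genuine obstacle here, as the substantive content is already isolated in Lemma~\ref{lemmaex}. The two points that merely require care are (i) noting that $\varphi_\exists$ has the \emph{same size} as the input, which is precisely what keeps the complexity at PSPACE rather than pushing it up to EXPSPACE, and (ii) stating the $\varphi(\pi)$-versus-LTL correspondence cleanly enough in the hardness direction that the single existential quantifier is visibly harmless.
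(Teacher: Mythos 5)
Your proposal is correct and follows essentially the same route as the paper: membership via the substitution of Definition~\ref{psi_exists} together with Lemma~\ref{lemmaex} and Theorem~\ref{t1}, and hardness via the embedding $\exists \pi.\,\varphi(\pi)$ of LTL into the $\exists^*$ fragment. The extra details you supply (the size-preservation remark and the structural induction justifying the singleton-model correspondence) are sound elaborations of steps the paper leaves implicit.
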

 	\begin{proof}
 		We use Definition~\ref{psi_exists} to transform an arbitrary \ex HyperLTL formula to a plain LTL formula using polynomial space, which is, by Lemma~\ref{lemmaex}, equisatisfiable to the original HyperLTL formula. LTL-SAT is, by Theorem~\ref{t1}, in PSPACE and thus \ex HyperLTL-SAT is in {PSPACE} too.\\
 		We prove hardness with a reduction from LTL-SAT, which is, by Theorem~\ref{t1}, PSPACE-hard. An LTL formula $\varphi$ is reduced to a HyperLTL formula $\exists \pi.~\varphi(\pi)$, where $\varphi(\pi)$ denotes that every atomic proposition occurring in $\varphi$ is labelled with $\pi$.
 		By definition the resulting formula $\exists \pi.~\varphi(\pi)$ is satisfiable if and only if $\varphi$  is satisfiable.
 		Hence, \ex HyperLTL-SAT is {PSPACE-complete}. \qed
 	\end{proof}
 	
 	\begin{theorem}
 		\label{alt-free-complexity}
 		HyperLTL-SAT is PSPACE-complete for the alternation-free fragment.
 	\end{theorem}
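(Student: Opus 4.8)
The plan is to observe that this theorem is an immediate consequence of the two preceding lemmas, since, by the definition given at the start of this section, the alternation-free fragment is exactly the union of the \all and \ex fragments. So the work has already been done; what remains is to package it.

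For the upper bound I would argue as follows. Given an alternation-free HyperLTL formula $\psi$, first inspect its (finite) quantifier prefix. Either all its quantifiers are universal, in which case $\psi$ lies in the \all fragment, or all are existential, in which case $\psi$ lies in the \ex fragment (a quantifier-free matrix may be placed in either fragment, e.g.\ by prepending a single dummy universal quantifier over a fresh variable). In the first case, apply Definition~\ref{phi_l} to obtain, in polynomial space, an LTL formula of the same size that is equisatisfiable to $\psi$ by Lemma~\ref{lemmaall}; in the second case, apply Definition~\ref{psi_exists} to obtain, in polynomial space, an LTL formula of the same size that is equisatisfiable to $\psi$ by Lemma~\ref{lemmaex}. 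In both cases, run the PSPACE decision procedure for LTL-SAT guaranteed by Theorem~\ref{t1}. Since the translation uses only polynomial space and PSPACE is closed under this kind of composition, alternation-free HyperLTL-SAT is in PSPACE.

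For the lower bound, PSPACE-hardness is already witnessed inside the fragment: by Lemma~\ref{complexity_forall_n}, HyperLTL-SAT restricted to the \all fragment---which is contained in the alternation-free fragment---is PSPACE-hard, via the reduction sending an LTL formula $\varphi$ to $\forall\pi.~\varphi(\pi)$. (Lemma~\ref{complexity_exists_n} supplies an equally good alternative witness in the \ex fragment.) Combining the two bounds yields PSPACE-completeness.

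I expect no genuine obstacle here: the theorem is simply the disjunction of Lemmas~\ref{complexity_forall_n} and \ref{complexity_exists_n}. The only point deserving a sentence of care is confirming that the case split in the upper-bound argument is exhaustive, i.e.\ that every alternation-free formula falls into at least one of the two sub-fragments; this is immediate from the definition of ``alternation-free''.
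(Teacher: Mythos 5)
Your proposal is correct and matches the paper's argument exactly: the paper also derives the theorem directly from Lemma~\ref{complexity_forall_n} and Lemma~\ref{complexity_exists_n}, just without spelling out the case split and the hardness witness as you do. Your extra care about exhaustiveness of the case split is fine but not needed beyond the definition of alternation-freeness.
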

 	
 	\begin{proof}
 		Follows directly from Lemma~\ref{complexity_forall_n} and Lemma~\ref{complexity_exists_n}. \qed
 	\end{proof}
 	
 	\section{The \boldmath{$\exists^*\forall^*$} Fragment}
 	\label{exall}
 	Allowing quantifier alternation makes the satisfiability problem significantly more difficult, and even leads to undecidability, as we will see in the next section. In this section, we show that deciding formulas with a single quantifier alternation is still possible if the quantifiers start with an existential quantifier.
 	A HyperLTL formula is in the \exall fragment iff it is of the form $\exists \pi_1 \ldots \exists \pi_n \forall \pi'_1 \ldots \forall \pi'_m.\;\psi$.
 	This fragment is especially interesting, because it includes implications between alternation-free formulas. 
 	The idea of the decision procedure is to eliminate the universal quantifiers by explicitly enumerating all possible interactions between the universal and existential quantifiers. This leads to an exponentially larger, but equisatisfiable \ex formula.
 	
 	\begin{lemma}
 		\label{sp}
 		For every formula in the \exall fragment, there is an equisatisfiable formula in the \ex fragment with exponential size.
 	\end{lemma}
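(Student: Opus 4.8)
\emph{Proof strategy.} Write the given formula as $\exists \pi_1 \ldots \exists \pi_n \forall \pi'_1 \ldots \forall \pi'_m.\;\psi$ with $\psi$ quantifier-free. I would first dispose of the degenerate cases: if $m = 0$ the formula is already in the \ex fragment, and if $n = 0$ it is a \all formula, hence by Lemma~\ref{lemmaall} equisatisfiable to an LTL formula and thus to a single-variable \ex formula. So assume $n, m \geq 1$. The plan is to eliminate the universal quantifiers by explicitly enumerating, for every universal variable, which of the $n$ existential witnesses it gets instantiated by. Concretely, define the \ex formula
\[
  \chi \;\coloneqq\; \exists \pi_1 \ldots \exists \pi_n.\;\; \bigwedge_{f \colon \{1,\ldots,m\} \to \{1,\ldots,n\}} \psi\bigl[\pi'_1 \mapsto \pi_{f(1)},\, \ldots,\, \pi'_m \mapsto \pi_{f(m)}\bigr].
\]
It has $n^m$ conjuncts, each a copy of $\psi$ up to renaming of free trace variables, so $|\chi| = O(n + n^m\,|\psi|)$, which is exponential in the size of the input formula. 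What remains is to show that $\chi$ and the original formula are equisatisfiable.

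Two observations drive the argument. First, because $\psi$ is quantifier-free, whether $\Pi \models_T \psi$ holds depends only on the assignment $\Pi$ and not on the trace set $T$; this is immediate from the atomic, boolean and temporal clauses of the semantics. Second, a small-model property: if $T$ satisfies $\exists \pi_1 \ldots \exists \pi_n \forall \pi'_1 \ldots \forall \pi'_m.\;\psi$ with existential witnesses $t_1, \ldots, t_n \in T$, then so does $T' \coloneqq \{t_1, \ldots, t_n\}$, with the same witnesses: the existential part is untouched, and the universal quantifiers now range over a subset of $T$, so every instantiation that must be checked in $T'$ was already checked, and satisfied, in $T$. Hence every satisfiable \exall formula has a model with at most $n$ traces.

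Given these two facts, the equisatisfiability is bookkeeping. For the forward direction I would take any model, collapse it by the small-model property to $T = \{t_1, \ldots, t_n\}$ with witnesses $t_1, \ldots, t_n$, and note that for each $f$ the assignment $[\pi_i \mapsto t_i]_i$ satisfies the conjunct $\psi[\pi'_j \mapsto \pi_{f(j)}]$ iff $[\pi_i \mapsto t_i]_i[\pi'_j \mapsto t_{f(j)}]_j$ satisfies $\psi$ --- which holds because the universal part of the original formula holds in $T$ and each $t_{f(j)} \in T$; so $T$ is a model of $\chi$. For the converse I would take a model $T$ of $\chi$ with existential witnesses $t_1, \ldots, t_n$, set $T' \coloneqq \{t_1, \ldots, t_n\}$ (nonempty since $n \geq 1$), and check that $T'$ with witnesses $t_1, \ldots, t_n$ satisfies the original formula: any instantiation of $\pi'_1, \ldots, \pi'_m$ by traces of $T'$ has the form $\pi'_j \mapsto t_{g(j)}$ for some $g \colon \{1,\ldots,m\} \to \{1,\ldots,n\}$, and the conjunct of $\chi$ indexed by $g$ states exactly that $\psi$ holds under this assignment. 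In both directions observation one is used to move satisfaction of $\psi$ freely between $T$ and $T'$.

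I expect the only genuine obstacle to be pinning down the small-model step together with the ambient-set-independence of $\psi$: once it is clear that collapsing a model to its existential witnesses preserves both the existential witnesses (trivially) and the universal obligations (because fewer tuples then remain to be checked), and that the truth of the quantifier-free body does not depend on this change of $T$, the rest is the mechanical enumeration of the $n^m$ instantiation functions, with the size bound immediate. The degenerate cases $n = 0$ and $m = 0$ must be treated separately, as above, because the equisatisfiability of $\chi$ with the original formula uses $n \geq 1$ to keep the constructed model nonempty.
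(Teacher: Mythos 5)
Your proposal is correct and takes essentially the same route as the paper: your $\chi$ is exactly the paper's $\mathit{sp}$-construction, enumerating all $n^m$ instantiations of the universal variables by the existential witnesses. If anything, your converse direction is more careful than the paper's one-line claim that any model $T'$ of $\mathit{sp}(\varphi)$ satisfies $\varphi$ outright --- restricting to the witness set $\{t_1,\ldots,t_n\}$ via your small-model observation is what actually discharges the universal obligations.
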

 	
 	\begin{proof}
 		We define a function $\mathit{sp}$ that takes a formula of the form $\exists \pi_1 \ldots \exists \pi_n \forall \pi'_1 \ldots \forall \pi'_m.~\psi$ and yields an \ex HyperLTL formula $\psi'$ of size $\mathcal{O}(n^m)$ of the following shape, where $\psi[\pi'_i \backslash \pi_i]$ denotes that the trace variable $\pi'_i$ in $\psi$ is replaced by $\pi_i$.
 		\begin{align*}
 		\exists \pi_1 \ldots \exists \pi_n.~\bigwedge_{j_1=1}^n~\ldots~\bigwedge_{j_m=1}^n.~\psi[\pi'_1\backslash \pi_{j_1}] \ldots \psi[\pi'_m\backslash \pi_{j_m}]
 		\end{align*}
 		%A proof that $sp(\psi)$ is equisatisfiable to $\exists \pi_1 \ldots \exists \pi_n \forall \pi'_1 \ldots \forall \pi'_m.\;\psi$ is in the appendix.
 		Let $\varphi$ be an $\exists^* \forall^*$ HyperLTL formula satisfied by some model $T$. Hence, there exist traces $t_1, \ldots, t_n \in T$ such that $\{t_1, \ldots, t_n\}$ satisfies $\mathit{sp}(\varphi)$. Assume $\mathit{sp}(\varphi)$ is satisfied by some model $T'$. Since $\mathit{sp}$ covers every possible combination of trace assignments for the universally quantified trace variables, $T' \models \varphi$. \qed
 	\end{proof}

 	\begin{example}
 		Consider the \exall formula $\exists \pi_1 \exists \pi_2 \forall \pi'_{1} \forall \pi'_{2}.\;(\LTLsquare a_{\pi'_1} \wedge \LTLsquare b_{\pi'_2}) \wedge (\LTLsquare c_{\pi_1} \wedge \LTLsquare d_{\pi_2})$.
 		Applying the construction from Lemma~\ref{sp}, we obtain the following \ex formula:
 		\begin{align*}
 		sp(\exists \pi_1 \exists \pi_2 \forall \pi'_{1} \forall \pi'_{2}.\;&(\LTLsquare a_{\pi'_1} \wedge \LTLsquare b_{\pi'_2}) \wedge (\LTLsquare c_{\pi_1} \wedge \LTLsquare d_{\pi_2}))~\text{yields}: \\
 		\exists \pi_1 \exists \pi_2.\;((&\LTLsquare a_{\pi_1} \wedge \LTLsquare b_{\pi_1}) \wedge (\LTLsquare c_{\pi_1} \wedge \LTLsquare d_{\pi_2})) \\
 		{} \wedge ((&\LTLsquare a_{\pi_2} \wedge \LTLsquare b_{\pi_1}) \wedge (\LTLsquare c_{\pi_1} \wedge \LTLsquare d_{\pi_2}))	 \\
 		{} \wedge ((&\LTLsquare a_{\pi_1} \wedge \LTLsquare b_{\pi_2}) \wedge (\LTLsquare c_{\pi_1} \wedge \LTLsquare d_{\pi_2}))	 \\
 		{} \wedge ((&\LTLsquare a_{\pi_2} \wedge \LTLsquare b_{\pi_2}) \wedge (\LTLsquare c_{\pi_1} \wedge \LTLsquare d_{\pi_2}))
 		\end{align*}
 	\end{example}

 	Combining the construction from Lemma~\ref{sp} with the satisfiability check for \ex formulas from Section~\ref{alt-free-frag}, we obtain an exponential-space decision procedure for the \exall fragment.

 	\begin{theorem}
 		\label{theoremexall}
 		\exall HyperLTL-SAT is EXPSPACE-complete.
 	\end{theorem}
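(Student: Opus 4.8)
The plan is to prove the two inclusions separately. EXPSPACE-membership follows by chaining the constructions already available: by Lemma~\ref{sp} an $\exists^*\forall^*$ formula $\varphi$ has an equisatisfiable $\exists^*$ formula $\mathit{sp}(\varphi)$ of size $\mathcal{O}(n^m)$, hence of size exponential in $|\varphi|$; by Lemma~\ref{lemmaex} (applied exactly as in Lemma~\ref{complexity_exists_n}) this has an equisatisfiable LTL formula of the same, exponential, size; and by Theorem~\ref{t1} LTL-SAT runs in space polynomial in the length of its input. So $\exists^*\forall^*$ HyperLTL-SAT runs in polynomial space in a formula of exponential size, i.e. in EXPSPACE, and all intermediate formulas fit in that space, so the procedure is effective.

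For EXPSPACE-hardness I would reduce in polynomial time from the acceptance problem of a Turing machine $M$ whose space consumption on inputs of length $k$ is bounded by $2^{p(k)}$ for a fixed polynomial $p$; every EXPSPACE language reduces to such a problem. Set $n\coloneqq p(|w|)$. A model of the target formula $\varphi_{M,w}$ should encode an accepting run of $M$ on $w$ in a single existentially quantified \emph{computation trace} $\pi$, read as consecutive blocks of length $2^n$ with block $k$ holding the $k$-th configuration: position $k\cdot 2^n+v$ carries the content of tape cell $v$ (a tape symbol, possibly annotated with head and state) together with an $n$-bit counter holding the value $v$. An $\mathcal{O}(n)$-size LTL gadget forces this counter to increment modulo $2^n$ (so blocks have length exactly $2^n$), and further LTL subformulas over $\pi$ of size $\mathcal{O}(|M|+n)$ express that the first block is the initial configuration of $M$ on $w$ and that some block carries the accepting state. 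The real difficulty is consistency between successive configurations — the content of cell $v$ in block $k+1$ must be the value dictated by $M$'s transition relation from cells $v-1,v,v+1$ of block $k$ — since cell $v$ of block $k$ and cell $v$ of block $k+1$ lie $2^n$ positions apart. Rather than nesting that many $\LTLnext$ operators, I would cross the block with a single $\LTLuntil$ of the shape ``advance to the next position whose counter equals $v$'', which has only polynomial size, and carry the constantly many possible contents of the three relevant cells over the gap by a disjunction over those cases; auxiliary fields recording the neighbouring cells' contents handle the absence of a past operator, and the two boundary cells $v=0$ and $v=2^n-1$ get their own fixed-address rules.

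A polynomial-size formula still cannot name all $2^n$ addresses $v$ at which this check must fire, and resolving this is the heart of the construction. The model is forced, by existential quantifiers $\exists\rho_0\exists\rho_1$, to contain two auxiliary \emph{bit traces} $\rho_0,\rho_1$ — marked by a dedicated proposition and carrying a proposition $b$ that is constantly false on $\rho_0$ and constantly true on $\rho_1$ — and a universal constraint forces every marked trace in the model to carry a constant $b$. The remaining $n$ universal quantifiers $\forall\pi'_1\ldots\forall\pi'_n$ range over the model, and whenever each $\pi'_i$ is a bit trace the tuple $(b_{\pi'_1},\dots,b_{\pi'_n})\in\{0,1\}^n$ names an arbitrary address $v$; the body of $\varphi_{M,w}$ then imposes the inter-configuration consistency check for exactly that $v$ (comparing $\pi$'s $n$-bit counter with the bits $b_{\pi'_i}$ is again an $\mathcal{O}(n)$ propositional test), and it is made vacuously true on tuples in which some $\pi'_i$ equals $\pi$ or another non-bit trace. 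Correctness then goes: if $M$ accepts $w$, the three-trace model built from the accepting run satisfies $\varphi_{M,w}$; conversely, from any model one extracts $\pi$, whose first block is the initial configuration, whose consecutive blocks are transition-consistent (instantiate the $\pi'_i$ with the combination of $\rho_0,\rho_1$ spelling each address), and which reaches the accepting state, so inductively the blocks of $\pi$ are the successive configurations of $M$ on $w$ and $M$ accepts. Since $\varphi_{M,w}$ has polynomially many quantifiers and a polynomial-size body, the reduction is polynomial, and with the upper bound this gives EXPSPACE-completeness.

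The main obstacle — and the only genuinely delicate part — is exactly this encoding: compressing the exponentially wide inter-configuration consistency condition into a polynomial-size formula, by bridging each $2^n$-block gap with a single $\LTLuntil$ and by using the (unboundedly many) universal trace quantifiers over just two bit traces to enumerate all $2^n$ cell addresses.
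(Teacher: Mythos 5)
Your proposal is correct and follows essentially the same route as the paper: the upper bound chains Lemma~\ref{sp}, Lemma~\ref{lemmaex} and Theorem~\ref{t1}, and the lower bound encodes an exponential-space bounded Turing machine using two existentially quantified constant-bit traces together with $n$ universally quantified trace variables whose bits address the $2^n$ tape cells. The paper memorizes the cell content with $k_\Sigma$ further universal quantifiers where you use a case disjunction over symbols, but this and your window-style consistency check are only cosmetic variations on the same construction.
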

 	
 	\begin{proof}
 		Membership in EXPSPACE follows from Lemma~\ref{sp} and Lemma~\ref{lemmaex}.
 		We show EXPSPACE-hardness via a reduction from the problem whether an exponential-space bounded deterministic Turing machine $T$ accepts an input word $x$. Given $T$ and $x$, we construct an \exall HyperLTL formula $\varphi$ such that $T$ accepts $x$ iff $\varphi$ is satisfiable.
 		
 		Let $T=(\Sigma, Q, q_0, F, \rightarrow)$, where $\Sigma$ is the alphabet, $Q$ is the set of states, $q_0 \in Q$ is the initial state, $F\subseteq Q$ is the set of final states, and ${\rightarrow} \subseteq Q \times \Sigma \times Q \times \Sigma \times \{L,R\}$ is the transition relation. We use $(q,\sigma) \rightarrow (q',\sigma',\Delta)$ to indicate that when $T$ is in state $q$ and it reads the input $\sigma$ in the current tape cell, it changes its state to $q'$, writes $\sigma'$ in the current tape cell, and moves its head one cell to the left if $\Delta=L$ and one cell to the right if $\Delta = R$. Let $n \in \mathcal{O}(|x|)$ be such that the working tape of $T$ has $2^n$ cells.
 		We encode each letter of $\Sigma$ as a valuation of a set $\varset{s}=\{s_1, \ldots, s_{k_\Sigma}\}$ of atomic propositions and 
 		each state $Q$ as a valuation of another set $\varset{q}=\{q_1, \ldots, q_{k_Q}\}$ of atomic propositions, where ${k_\Sigma}$ is logarithmic in $|\Sigma|$ and $k_Q$ is logarithmic in $|Q|$. We furthermore use the valuations of a set $\varset{a}=\{a_1,\ldots, a_n\}$ to encode the position of a tape cell in a configuration of $T$, and the valuations of a set $\varset{h}=\{h_1,\ldots, h_n\}$ to encode the position of the head of the Turing machine.
 		With these atomic propositions, we can represent configurations of the Turing machine as sequences of valuations of the atomic propositions. The state of the Turing machine is encoded as the valuation of $\varset{q}$ at the position indicated by $\varset{h}$. Computations of a Turing machine are sequences of configurations; we thus represent computations as traces. 
 		
 		We begin our encoding into HyperLTL with four quantifier-free formulas over a free trace variable $\pi$:
 		$\varphi_{\mathit{init}}(\pi)$ encodes that the initial configuration represents $x$ and $q_0$, and places the head in the first position of the sequence. $\varphi_{\mathit{head}}(\pi)$ ensures that the position of the head may only change when a new configuration begins and that the change of the position as well as the change of the state is as defined by $\rightarrow$. $\varphi_{\mathit{count}}(\pi)$ expresses that the addresses in $\varset{a}$ continuously count from 1 to $2^n$. $\varphi_{\mathit{halt}}(\pi)$ expresses that the Turing machine halts eventually, i.e., the trace eventually visits a final state at the position of the head. 
 		
 		The more difficult part of the encoding now concerns the comparison of the tape content from one configuration to the next. We need to enforce that the tape content at the position represented by $\varset{h}$ changes as defined by $\rightarrow$, and that the content of all tape cells except for the position represented by $\varset{h}$ stays the same. For this purpose, we need to be able to memorize a position from one configuration to the next. We accomplish the ``memorization'' with the following trick: we introduce two existentially quantified trace variables $\pi_{\mathit{zero}}$ and $\pi_{\mathit{one}}$. Let $v$ be a new atomic proposition. We use a quantifier-free formula $\varphi_{\mathit{zero/one}}(\pi_{\mathit{zero}},\pi_{\mathit{one}})$ to ensure that $v$ is always $\mathit{false}$ on $\pi_{\mathit{zero}}$ and always $\mathit{true}$ on $\pi_{\mathit{one}}$. We now introduce another set of $n$ universally quantified trace variables $\pi_1, \pi_2, \ldots, \pi_n$ that will serve as memory: if one of these trace variables is bound to $\pi_{\mathit{zero}}$ its ``memory content'' is 0, if it is bound to $\pi_{\mathit{one}}$ its memory content is 1.
 		We add a sufficient number of universally quantified variables to memorize the position of some cell and its content.
 		Our complete encoding of the Turing machine as a HyperLTL formula then looks, so far, as follows:
 		\[ \begin{array}{l}
 		\exists \pi, \pi_{\mathit{zero}}, \pi_{\mathit{one}}.~ \forall \pi_1, \pi_2, \ldots, \pi_n, \pi'_1, \pi'_2, \ldots, \pi'_{k_\Sigma}.~  \\
 		\qquad  \varphi_{\mathit{init}}(\pi) \wedge \varphi_{\mathit{head}}(\pi) \wedge \varphi_{\mathit{count}}(\pi) \wedge \varphi_{\mathit{halt}}(\pi) \wedge\, \varphi_{\mathit{zero/one}}(\pi_{\mathit{zero}},\pi_{\mathit{one}})\\
 		\qquad \wedge\, \psi(\pi, \pi_1, \pi_2, \ldots, \pi_n, \pi'_1, \pi'_2, \ldots, \pi'_{k_\Sigma}) \end{array}\]
 		The missing requirement about the correct contents of the tape cells is encoded in the last conjunct $\psi$.
 		We first ensure that all the universally quantified traces have constant values in $v$, i.e., $v$ is either always $\mathit{true}$ or always $\mathit{false}$.
 		To enforce that the tape content changes at the head position, we specify in $\psi$ that whenever we are at the head position, i.e., whenever $a_{i,\pi} = h_{i,\pi}$ for all $i=1,\ldots,n$, then when we visit the same position in the next configuration, the tape content must be as specified by $\rightarrow$: i.e., if  $a_{i,\pi} = v_{\pi_i}$ for all $i=1,\ldots,n$, then when $a_{i,\pi} = v_{\pi_i}$ holds again for all $i=1,\ldots,n$ during the next configuration, the tape content as represented in $\varset{s}$ must be the one defined by $\rightarrow$. 
 		To enforce that the tape content is the same at every position except that encoded in $\varset{h}$, we specify that for all positions except the head position, i.e., whenever $a_{i,\pi} \neq h_{i,\pi}$ for some $i=1,\ldots,n$, then if  $a_{i,\pi} = v_{\pi_i}$ for all $i=1,\ldots,n$, and $s_{i,\pi}=v_{\pi'_i}$ for all $i=1,\ldots,k_\Sigma$, then the following must hold: when, during the next configuration, we visit the same position again, i.e., when again  $a_{i,\pi} = v_{\pi_i}$ for all $i=1,\ldots,n$,  we must also find the same tape content again, i.e., $s_{i,\pi}=v_{\pi'_i}$ for all $i=1,\ldots,k_\Sigma$.
 		
 		By induction on the length of the computation prefix, we obtain that any model of the HyperLTL formula represents in $\pi$ a correct computation of the Turing machine $T$. Since this computation must reach a final state, the model exists iff $T$ accepts the input word $x$. \qed
 	\end{proof}
 	
 	In practice, the number of quantifiers is usually small. Often it is sufficient to reason about pairs of traces, which can be done with just two quantifiers.
 	To reflect this observation, we define a \emph{bounded} version of the $\exists^* \forall^*$ fragment where the number of universal quantifiers that may occur in the HyperLTL formula is bounded by some constant $b \in \mathbb{N}$.
 	A bounded \exall formula of length $n$ with bound $b$ can be translated to an equisatisfiable LTL formulas of size $\mathcal{O}(n^b)$. The satisfiablility problem can thus be solved in polynomial space.
 	
 	\begin{corollary}
 		\label{complexity_bounded_exists_forall}
 		Bounded \exall HyperLTL-SAT is {PSPACE-complete}.
 	\end{corollary}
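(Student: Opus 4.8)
The plan is to obtain both directions as essentially immediate consequences of results already established. For membership in PSPACE, I would take a bounded \exall formula $\exists \pi_1 \ldots \exists \pi_n \forall \pi'_1 \ldots \forall \pi'_m.~\psi$ with $m \leq b$ and apply the construction $\mathit{sp}$ from Lemma~\ref{sp}. The resulting \ex formula is the $m$-fold conjunction $\bigwedge_{j_1=1}^n \ldots \bigwedge_{j_m=1}^n$ of substituted copies of $\psi$, so it has size $\mathcal{O}(n^m \cdot |\psi|)$. Since $b$ is a fixed constant and $m \leq b$, this is polynomial in the length of the input formula. I would then feed this \ex formula into the translation of Lemma~\ref{lemmaex} to obtain an equisatisfiable LTL formula of the same (polynomial) size, and decide its satisfiability with the PSPACE procedure for LTL-SAT (Theorem~\ref{t1}). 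Because the intermediate formulas are only polynomially larger than the input, the whole pipeline runs in polynomial space; in fact one can generate the LTL formula on the fly rather than store it in its entirety, although this is not even necessary since a polynomial of a polynomial is still a polynomial.

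For PSPACE-hardness, I would observe that the bounded \exall fragment already contains the \ex fragment for bound $b = 0$, and in particular contains every formula of the shape $\exists \pi.~\varphi(\pi)$. The reduction from LTL-SAT used in the proof of Lemma~\ref{complexity_exists_n} therefore applies verbatim and establishes PSPACE-hardness of bounded \exall HyperLTL-SAT for every bound $b \geq 0$; alternatively, for $b \geq 1$, the single-quantifier \all reduction of Lemma~\ref{complexity_forall_n} works just as well. Combining the two directions yields PSPACE-completeness.

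The only point that needs care — more a caveat than an obstacle — is that the polynomial bound on the size of $\mathit{sp}(\varphi)$ crucially uses that $b$ is a constant which is \emph{not} part of the input: the exponent in $\mathcal{O}(n^m)$ is $m \leq b$ and hence fixed. If the bound were instead supplied as part of the input, the translation would again be exponential and one would fall back to the EXPSPACE bound of Theorem~\ref{theoremexall}. Everything else — correctness of $\mathit{sp}$, correctness of the \ex-to-LTL translation, and the complexity of LTL-SAT — is already in hand from Lemma~\ref{sp}, Lemma~\ref{lemmaex}, and Theorem~\ref{t1}, so no new combinatorial argument is required.
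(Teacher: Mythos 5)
Your proof is correct and follows essentially the same route as the paper: apply $\mathit{sp}$ (Lemma~\ref{sp}), observe that the blow-up $\mathcal{O}(n^b)$ is polynomial because the bound $b$ is a fixed constant, reduce to LTL via Lemma~\ref{lemmaex}, and invoke Theorem~\ref{t1}; hardness is inherited from the alternation-free fragment. Your explicit treatment of the lower bound and the caveat that $b$ must not be part of the input are both sound and merely make explicit what the paper leaves implicit.
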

 	
 	Another observation that is important for the practical application of our results is that implication between alternation-free formulas is decidable. As discussed in the introduction, it frequently occurs that multiple formalizations are proposed for the same hyperproperty, and one would like to determine whether the proposals are equivalent, or whether one version is stronger than the other.
 	A HyperLTL formula $\psi$ \emph{implies} a HyperLTL formula $\varphi$ iff every set $T$ of traces that satisfies $\psi$ also satisfies $\varphi$.
 	%Two HyperLTL formulas $\psi$ and $\varphi$ are \emph{equivalent} iff $\psi$ implies $\varphi$ and $\varphi$ implies $\psi$.
 	
 	%Note that we are no longer only considering equisatisfiability.
 	%We will refer to this decision problem as $\forall^n \leftrightarrow \forall^m$.
 	%\chahn{Lieber Implication, statt Equivalence?}
 	To determine whether $\psi$ implies $\varphi$, we check the satisfiability of the negation $\neg (\psi \rightarrow \varphi)$. If one formula is in the \all fragment and the other in the \ex fragment, implication checking is especially easy, because the formula we obtain is alternation-free.
 	If the given HyperLTL formulas are both in the $\forall^*$ fragment, respectively \ex fragment, we can construct an equisatisfiable \exall HyperLTL formula with the help of the following lemma.
 	
 	\begin{lemma}
 		\label{equi_forall}
 		Implication between two HyperLTL formulas of the \all fragment is decidable in exponential space.
 	\end{lemma}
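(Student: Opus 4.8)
The plan is to reduce the implication problem to satisfiability in the \exall fragment and then invoke Theorem~\ref{theoremexall}. Write the two given \all formulas as $\psi = \forall \pi_1 \ldots \forall \pi_n.~\alpha$ and $\varphi = \forall \sigma_1 \ldots \forall \sigma_m.~\beta$ with $\alpha,\beta$ quantifier-free; after renaming bound trace variables we may assume that $\{\pi_1,\dots,\pi_n\}$ and $\{\sigma_1,\dots,\sigma_m\}$ are disjoint. Since $T \not\models \varphi$ is equivalent to $T \models \neg\varphi$, $\psi$ implies $\varphi$ iff there is no non-empty trace set $T$ with $T \models \psi$ and $T \models \neg\varphi$. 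The idea is to capture ``$T \models \psi$ and $T \models \neg\varphi$'' by a single \exall formula
\[
  \varphi' \;\coloneqq\; \exists \sigma_1 \ldots \exists \sigma_m.~\forall \pi_1 \ldots \forall \pi_n.~(\alpha \wedge \neg\beta),
\]
whose size is linear in $|\psi| + |\varphi|$.

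The key step is to prove that for every non-empty $T$ we have $T \models \varphi'$ iff $T \models \psi$ and $T \models \neg\varphi$. The relevant observation is that $\alpha$ mentions only $\pi_1,\dots,\pi_n$ and $\neg\beta$ mentions only $\sigma_1,\dots,\sigma_m$, so under any trace assignment the truth of $\alpha$ depends only on the traces assigned to the $\pi_i$ and the truth of $\neg\beta$ only on those assigned to the $\sigma_j$. Unwinding the semantic clauses for the quantifier blocks of $\varphi'$, using $T \neq \emptyset$ to delete the universal block $\forall \pi_1 \ldots \pi_n$ standing in front of the $\sigma$-only subformula $\neg\beta$, and distributing the remaining quantifiers over the conjunction, one obtains exactly $\big(\forall \pi_1 \ldots \forall \pi_n.~\alpha\big) \wedge \big(\exists \sigma_1 \ldots \exists \sigma_m.~\neg\beta\big)$, i.e.\ $\psi \wedge \neg\varphi$. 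I expect this prenex-form argument to be the main (though still elementary) obstacle; it should be carried out by a direct case analysis on the clauses of $\models_T$, the empty-set case being irrelevant because HyperLTL-SAT asks for non-empty models.

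It then follows that $\varphi'$ is (HyperLTL-)satisfiable iff $\psi$ does not imply $\varphi$. By Theorem~\ref{theoremexall}, satisfiability of $\varphi'$ is decidable in exponential space, and since deterministic space complexity classes are closed under complement, so is unsatisfiability of $\varphi'$ — that is, deciding whether $\psi$ implies $\varphi$ can be done in exponential space. The remaining steps (variable renaming, pushing the negation through the universal prefix of $\varphi$, and the linear size bound) are routine. The same construction works for implication between two \ex formulas, and for one \all and one \ex formula it yields an alternation-free formula, as remarked before the lemma.
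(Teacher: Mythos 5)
Your proposal is correct and follows essentially the same route as the paper: negate the implication, push the negation through the second universal block, prenex to a single $\exists^*\forall^*$ formula $\exists\vec{\sigma}\,\forall\vec{\pi}.\,(\alpha\wedge\neg\beta)$, and decide its satisfiability in exponential space (the paper chains Lemma~\ref{sp} and Lemma~\ref{lemmaex} directly, which is exactly the membership half of Theorem~\ref{theoremexall} that you invoke). Your explicit justification of the prenexing step for non-empty models is the step the paper compresses into ``by quantifier rules,'' so there is no substantive difference.
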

 	
 	\begin{proof}
 		Let two HyperLTL formulas $\forall \pi_1 \ldots\forall \pi_n.~\psi$ and $\forall \pi_1'\ldots \forall \pi_m'.~\varphi$ be given. For determining if $\forall \pi_1 \ldots \forall \pi_n.~\psi$ implies $\forall \pi_1' \ldots \forall \pi_m'.~\varphi$, we will answer the equivalent question whether $\forall \pi_1 \ldots\forall \pi_n.~\psi \rightarrow \forall \pi_1'\ldots \forall \pi_m'.~\varphi$ is a tautology. To this end, we will check if the negation, $\neg (\forall \pi_1 \ldots \forall \pi_n.~\psi \rightarrow \forall \pi_1' \ldots \forall \pi_m'.~\varphi)$, is satisfiable.
 		Note that this is not (yet) a proper HyperLTL formula.
 		We can, however, simplify the formula and reduce the problem to satisfiability checking for plain LTL formulas. The following equivalence follows from basic logical definitions:
 		\begin{comment}{
 		\begin{align*}
 		&\neg (((\forall \pi_1 \ldots \pi_n.~\psi) \rightarrow (\forall \pi_1'\ldots\pi_m'.~\varphi))\\ %\wedge ((\forall \pi_0'\ldots\pi_m'.~\varphi) \rightarrow (\forall \pi_0\ldots \pi_n.~\psi))) \\
 		\stackrel{\text{Def.} \rightarrow}{\equiv}&~~\neg ((\neg(\forall \pi_1 \ldots \pi_n.~\psi) \vee (\forall \pi_1'\ldots \pi_m'.~\varphi)) \\
 		%\wedge (\neg(\forall \pi_0'\ldots \pi_m'.~\varphi) \vee (\forall \pi_0 \ldots \pi_n.~\psi))) \\
 		\stackrel{\text{Def.} \neg}{\equiv}&~~\neg (((\exists \pi_1 \ldots \pi_n.~\neg \psi) \vee (\forall \pi_1' \ldots \pi_m'.~\varphi)) \\
 		%\wedge ((\exists \pi_0' \ldots \pi_m'.~\neg \varphi) \vee (\forall \pi_0 \ldots \pi_n.~\psi))) \\
 		\stackrel{\text{De Morgan}}{\equiv}&~~\neg((\exists \pi_1 \ldots \pi_n.~\neg \psi) \vee (\forall \pi_1'\ldots \pi_m'.~\varphi)) \\
 		%\vee \neg((\exists \pi_0'\ldots \pi_m'.~\neg \varphi) \vee (\forall \pi_0 \ldots \pi_n.~\psi)) \\
 		\stackrel{\text{De Morgan}}{\equiv}&~~(\neg(\exists \pi_1 \ldots \pi_n.~\neg \psi) \wedge \neg(\forall \pi_1' \ldots \pi_m'.~\varphi)) \\
 		%\vee (\neg(\exists \pi_0' \ldots \pi_m'.~\neg \varphi) \wedge \neg(\forall \pi_0 \ldots \pi_n.~\psi)) \\
 		\stackrel{\text{Def.} \neg}{\equiv}&~~((\forall \pi_1 \ldots \pi_n.~\psi) \wedge (\exists \pi_1' \ldots \pi_m'.~\neg \varphi)) 
 		%\vee ((\forall \pi_0' \ldots \pi_m'.~\varphi) \wedge (\exists \pi_0 \ldots \pi_n.~\neg \psi))
 		\end{align*}
 		}
 		\end{comment}
 		\[
 		\neg (((\forall \pi_1 \ldots \pi_n.~\psi) \rightarrow (\forall \pi_1'\ldots\pi_m'.~\varphi)) \equiv ((\forall \pi_1 \ldots \pi_n.~\psi) \wedge (\exists \pi_1' \ldots \pi_m'.~\neg \varphi))
 		\]
 		By quantifier rules, it is sufficient to check $(\exists \pi_1' \ldots \pi_m' \forall \pi_1 \ldots \pi_n.~\psi \wedge \neg \varphi)$ for satisfiability.
 		%\begin{align*}
 		%$(\exists \pi_1' \ldots \pi_m' \forall \pi_1 \ldots \pi_n.~\psi \wedge \neg \varphi)$.
 		%\vee (\exists \pi_0 \ldots \pi_n \forall \pi_0' \ldots \pi_m'.~\varphi \wedge \neg \psi)
 		%\end{align*}
 		By Lemma~\ref{sp}, this formula is satisfiable iff $sp(\exists \pi_1' \ldots \pi_m' \forall \pi_1 \ldots \pi_n.~\psi \wedge \neg \varphi)$ is satisfiable.
 		%\vee sp(\exists \pi_0 \ldots \pi_n \forall \pi_0' \ldots \pi_m'.~\varphi \wedge \neg \psi)
 		By Lemma~\ref{lemmaex} and using the construction from Definition~\ref{psi_exists}, this formula is satisfiable iff the following formula is satisfiable:
 		\[\varphi_{\exists}(sp(\exists \pi_1' \ldots \pi_m' \forall \pi_1 \ldots \pi_n.~\psi \wedge \neg \varphi))\]
 		Note that this is a plain LTL formula.
 		If this formula is satisfiable, then the implication of the two HyperLTL formulas $\forall \pi_1\ldots\forall \pi_n.~\psi$ and $\forall \pi_1'\ldots\forall \pi_m'.~\varphi$ does not hold. Otherwise it does. \qed
 	\end{proof}
 	
 	\begin{lemma}
 		\label{equi_exists}
 		Implication between two HyperLTL formulas of the \ex fragment is decidable in exponential space.
 	\end{lemma}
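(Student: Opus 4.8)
The plan is to mirror the proof of Lemma~\ref{equi_forall} almost verbatim: I would reduce the implication question to a satisfiability question for an \exall formula and then invoke the decision procedure behind Theorem~\ref{theoremexall}. Let the two given \ex formulas be $\exists \pi_1 \ldots \exists \pi_n.~\psi$ and $\exists \pi_1' \ldots \exists \pi_m'.~\varphi$, with $\psi$ and $\varphi$ quantifier-free, and assume without loss of generality that the two lists of trace variables are disjoint (otherwise rename). The first formula implies the second iff $\neg((\exists \pi_1 \ldots \pi_n.~\psi) \rightarrow (\exists \pi_1' \ldots \pi_m'.~\varphi))$ is unsatisfiable. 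Using the propositional equivalence $\neg(\alpha\rightarrow\beta)\equiv\alpha\wedge\neg\beta$ together with the dualization of the existential quantifier prefix of $\varphi$ (exactly as in Lemma~\ref{equi_forall}), this negation is equivalent to
\[ (\exists \pi_1 \ldots \pi_n.~\psi) \wedge (\forall \pi_1' \ldots \pi_m'.~\neg\varphi). \]

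Next, since $\psi$ mentions only $\pi_1,\ldots,\pi_n$ and $\neg\varphi$ mentions only $\pi_1',\ldots,\pi_m'$, the standard quantifier rules let me pull both prefixes to the front without capture, obtaining the equivalent \exall formula
\[ \exists \pi_1 \ldots \exists \pi_n \forall \pi_1' \ldots \forall \pi_m'.~(\psi \wedge \neg\varphi). \]
This is precisely the shape handled in Section~\ref{exall}: by Lemma~\ref{sp} it is equisatisfiable to an \ex formula of exponential size, which by Lemma~\ref{lemmaex} (via the construction of Definition~\ref{psi_exists}) is equisatisfiable to a plain LTL formula of the same, exponential, size; since LTL-SAT is in PSPACE by Theorem~\ref{t1}, the whole check runs in exponential space. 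The implication holds iff this LTL formula is unsatisfiable, which establishes the claim.

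Two minor points need to be verified, but neither presents a real obstacle. First, although HyperLTL-SAT asks for a \emph{non-empty} model, the empty trace set never satisfies an \ex formula, hence never satisfies the conjunction displayed above, so restricting attention to non-empty witnesses is harmless here. Second, one must check that the quantifier manipulations are semantically sound relative to an arbitrary fixed trace set $T$; this is immediate from the HyperLTL semantics, because the two variable blocks range over the same $T$ and occur in syntactically disjoint subformulas. In short, all the genuine work is already encapsulated in Lemma~\ref{sp} and Theorem~\ref{theoremexall}; the content of the present lemma is just the observation that the negation of an implication between two \ex formulas falls into the \exall fragment.
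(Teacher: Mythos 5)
Your proposal is correct and follows exactly the route the paper intends: the paper's own proof of this lemma is literally the one-line remark that it proceeds ``analogously to the proof of Lemma~\ref{equi_forall}'', and your argument is precisely that analogue carried out in detail --- negate the implication, dualize the consequent's existential prefix into a universal one, merge the two disjoint variable blocks into an \exall formula $\exists \pi_1 \ldots \exists \pi_n \forall \pi_1' \ldots \forall \pi_m'.~(\psi \wedge \neg\varphi)$, and decide it via Lemma~\ref{sp}, Lemma~\ref{lemmaex}, and Theorem~\ref{t1}. Your two side remarks (non-emptiness of models and capture-free prenexing over a fixed trace set $T$) are both sound and only add rigor beyond what the paper states.
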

 	
 	\begin{proof}
 		The proof is done analogously to the proof of Lemma~\ref{equi_forall}. \qed
 	\end{proof}
 	
 	%If both $\psi$ and $\varphi$ are in the same fragment, then the formula is in the \exall fragment.
 	
 	%\begin{lemma}
 	%To determine if the \all formula $\forall \pi_1 \ldots \forall \pi_n.\;\psi$ implies the \all formula $\forall \pi'_1 \ldots \forall \pi'_m.\;\varphi$, we check the \exall formula $\exists \pi_1 \ldots \pi_n \forall \pi'_1 \ldots \pi'_m.\;\psi \wedge \neg \varphi$ for unsatisfiability.
 	%\end{lemma}

 	%Hence, we have to encode the question whether there is an implication between two alternation-free HyperLTL formulas in the worst case into an \exall HyperLTL-SAT problem, which we can decide in exponential space.
 	
 	Analogously to Theorem~\ref{theoremexall}, we obtain that checking implication between two alternation-free HyperLTL formulas is EXPSPACE-complete.
 	
 	\begin{theorem} 
 		Checking implication between alternation-free HyperLTL formulas is {EXPSPACE-complete}.
 	\end{theorem}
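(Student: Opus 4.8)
The plan is to follow the proof of Theorem~\ref{theoremexall} closely: obtain membership in EXPSPACE from the equivalences already established, and EXPSPACE-hardness from a reduction that this time outputs an implication instance. For membership, I would decide whether $\psi$ implies $\varphi$ (both alternation-free) by deciding whether $\psi\wedge\neg\varphi$ is unsatisfiable, where $\neg\varphi$ is brought to prenex form by dualizing the quantifier prefix of $\varphi$ and the two prefixes are then renamed apart and merged. Four cases remain. If $\psi$ and $\varphi$ are both \all, this is the situation of Lemma~\ref{equi_forall}; if both are \ex, that of Lemma~\ref{equi_exists}; either way $\psi\wedge\neg\varphi$ is an \exall formula whose satisfiability is decidable in exponential space (via Lemma~\ref{sp} and Lemma~\ref{lemmaex}). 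If $\psi$ and $\varphi$ lie in different alternation-free fragments, $\psi\wedge\neg\varphi$ is itself alternation-free, so Theorem~\ref{alt-free-complexity} gives a polynomial-space satisfiability test. Each of these tests is a deterministic, at most exponential-space computation, so the complementary unsatisfiability test is too, and hence implication is decidable in EXPSPACE.

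For hardness I would reduce from satisfiability of \exall formulas, which is EXPSPACE-hard by Theorem~\ref{theoremexall}. The starting observation is that for \all formulas, $\forall\vec\pi.\alpha$ fails to imply $\forall\vec\pi'.\beta$ exactly when the \exall formula $\exists\vec\pi'\forall\vec\pi.(\alpha\wedge\neg\beta)$ is satisfiable, and in that formula the matrix \emph{factors}: the conjunct $\alpha$ mentions only the universal variables and $\neg\beta$ only the existential ones. Since an arbitrary \exall formula $\exists\vec\sigma\forall\vec\tau.\chi$ does not factor this way, I would first rewrite it with fresh ``shadow'' propositions $a^{(i)}$, one for each atomic proposition $a$ and each existential index $i$. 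Take $\alpha$ to be the conjunction of a clause forcing all of $\vec\tau$ to agree at every step on every $a^{(i)}$ with the formula obtained from $\chi$ by replacing each $a_{\sigma_i}$ by $a^{(i)}_{\tau_1}$ --- so that the shadow content carried by an arbitrary universal trace plays the role of the $i$-th existential witness --- and take $\neg\beta$ to be $\bigwedge_i \LTLsquare\bigwedge_a (a^{(i)}_{\sigma_i}\leftrightarrow a_{\sigma_i})$, which makes each $\sigma_i$ permanently mirror itself in its own $i$-th shadow. Then $\forall\vec\tau.\alpha$ and $\forall\vec\sigma.\beta$ are \all formulas, polynomial in $|\chi|$ and computable in polynomial time, and it suffices to show that $\chi$ is satisfiable exactly when $\forall\vec\tau.\alpha$ does not imply $\forall\vec\sigma.\beta$.

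The hard part is precisely that last equivalence, i.e.\ showing that the shadow channel carries the existential witnesses faithfully past the universal quantifiers. For one direction, from a model of $\chi$ with witnesses $t_1,\dots,t_n$ I would extend every trace by the constant shadow content ``$a^{(i)}$ mirrors $t_i$''; then all traces agree on the shadows, the extended witnesses satisfy $\neg\beta$, and evaluating the rewritten $\chi$ through the shadows reproduces $t_1,\dots,t_n$, so $\chi$ holds as before. For the other direction, in any model of $\exists\vec\sigma\forall\vec\tau.(\alpha\wedge\neg\beta)$ the agreement clause pins each $a^{(i)}$ to a content common to all traces, $\neg\beta$ on the chosen witnesses forces that content to be the $\mathit{AP}$-projection of the $i$-th witness --- which is a trace of the projected model --- and the rewritten conjunct then states exactly that $\chi$ holds over the projection with those traces as witnesses. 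Together with the membership part and closure of EXPSPACE under complement (non-implication is thereby EXPSPACE-hard, so implication is co-EXPSPACE-hard and already lies in EXPSPACE), this gives EXPSPACE-completeness. A variant closer in spirit to ``analogously to Theorem~\ref{theoremexall}'' is to rerun its Turing-machine encoding directly in factored form: the antecedent universally quantifies over the computation trace and the memory traces and demands that all of them have constant $v$ and that whenever the computation trace passes the local configuration checks it also satisfies the memorization-based tape constraints, while the negated consequent only demands that the model contain the two traces on which $v$ is always false, resp.\ always true, together with one trace passing the local checks.
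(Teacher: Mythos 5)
Your proof is correct, and the membership half coincides with the paper's: reduce implication to (un)satisfiability of $\psi\wedge\neg\varphi$, which lands in the alternation-free fragment when the two formulas come from different fragments and in the \exall fragment otherwise, and then invoke Lemma~\ref{sp}, Lemma~\ref{lemmaex}, and closure of deterministic space classes under complement. The hardness half, however, takes a genuinely different route. The paper does not reduce from \exall satisfiability in general; it revisits its own Turing-machine encoding from Theorem~\ref{theoremexall}, observes that after re-quantifying the computation trace $\pi$ universally the matrix already factors into a part over the existential variables ($\varphi_2(\pi_{\mathit{zero}},\pi_{\mathit{one}})$) and a part over the universal ones ($\varphi_1(\pi)\wedge\psi(\pi,\varset{\pi}')$), and reads the result off directly as a failed implication between two \ex formulas. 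Your primary construction instead gives a uniform many-one reduction from the satisfiability of an \emph{arbitrary} \exall formula $\exists\vec\sigma\forall\vec\tau.\chi$ to non-implication between two \all formulas, using shadow propositions $a^{(i)}$ that all traces are forced to agree on, so that the existential witnesses travel through the universal quantifiers as globally shared data; your ``variant'' at the end is essentially the paper's argument. Your generic reduction buys a stronger, reusable statement (non-implication of \all formulas is as hard as \exall satisfiability itself, independent of any particular encoding) at the cost of having to verify the faithfulness of the shadow channel, which you do correctly; the paper's version is shorter because the TM formula happens to be almost in factored form already. Two small points to tidy up in your construction: the agreement clause needs at least two universally quantified variables to compare (add a dummy $\tau_2$ when $m=1$), and you should note explicitly that the equivalence $(\forall\vec\tau.\alpha)\wedge(\exists\vec\sigma.\neg\beta)\equiv\exists\vec\sigma\forall\vec\tau.(\alpha\wedge\neg\beta)$ relies on $\alpha$ and $\neg\beta$ having disjoint trace variables, which your factoring guarantees.
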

 	\begin{proof}
 		With Lemma~\ref{equi_forall} and Lemma~\ref{equi_exists}, the upper bound of Theorem~\ref{theoremexall} applies here as well. For the lower bound, we note that
 		the encoding in the proof of Theorem~\ref{theoremexall} is of the form
 		\[\exists \pi, \pi_{\mathit{zero}}, \pi_{\mathit{one}}.~ \forall \varset{\pi}'.~ \varphi_1(\pi) \wedge  \varphi_2(\pi_{\mathit{zero}}, \pi_{\mathit{one}}) \wedge \psi(\pi, \varset{\pi}'),\]
 		which is not an implication of alternation-free formulas. We can, however, transform this formula into an equisatisfiable formula by quantifying $\pi$ universally:
 		\[\exists \pi_{\mathit{zero}}, \pi_{\mathit{one}}.~ \forall \pi, \varset{\pi}'.~ \varphi_1(\pi) \wedge  \varphi_2(\pi_{\mathit{zero}}, \pi_{\mathit{one}}) \wedge \psi(\pi, \varset{\pi}')\]  
 		In the models of the new formula, the accepting computation of the Turing machine is simply represented on \emph{all} traces instead of on \emph{some} trace.
 		The formula is satisfiable iff the following implication between \ex formulas does \emph{not} hold:
 		\[
 		\exists \pi_{\mathit{zero}}, \pi_{\mathit{one}}.~\varphi_2(\pi_{\mathit{zero}}, \pi_{\mathit{one}})
 		\quad \mbox{ implies } \quad
 		\exists \pi, \varset{\pi}'.~ \neg (\varphi_1(\pi) \wedge \psi(\pi, \varset{\pi}'))
 		\]
 		Hence, we have reduced the problem whether an exponential-space bounded deterministic Turing machine accepts a certain input word to the implication problem between two \ex HyperLTL formulas. \qed
 	\end{proof}

 	%Analogously, we can define a bounded version of this problem for two given alternation-free HyperLTL formulae. $Q_{n}$ denotes that there is a quantifier prefix $Q$ of length $n$.
 	%$b$-Bounded equivalence checking for alternation free HyperLTL formulas is defined as the problem to determine whether $Q_{b_1}. \psi \leftrightarrow Q_{b_2}. \varphi$, with $b_1, b_2 \in \mathbb{N}$, s.t. $b_1,b_2 \leq b$, is a tautology.
 	
 	%\begin{corollary}
 	%	$b$-Bounded equivalence checking for the alternation free fragment of HyperLTL is \textbf{PSpace-complete}.
 	%\end{corollary}
 	
 	With the results of this section, we have reached the borderline of the decidable HyperLTL fragments.
 	We will see in the next section that HyperLTL-SAT immediately becomes undecidable if the formulas contain a quantifier alternation that starts with a universal quantifier.
 	
 	\section{The Full Logic}
 	\label{allex}
 	\begin{figure}
 		\begin{align}
 		\forall \pi \exists \pi_s \exists \pi'.~&\bigg(\Big((\dot a, \dot a)_{\pi_s} \vee (\dot b, \dot b)_{\pi_s}\Big) \label{6.4} \\
 		&~~~~~~\wedge ((\tilde{a},\tilde{a})_{\pi_s} \vee (\tilde{b},\tilde{b})_{\pi_s}) \LTLuntil \LTLsquare (\#,\#)_{\pi_s}\bigg) \label{6.5} \\
 		&\wedge \LTLdiamond \LTLsquare (\#,\#)_{\pi} \label{6.6} \\
 		&\wedge \Bigg( \bigvee_{i\in\{1,2,3\}} \mathit{StoneEncoding}_i \label{6.9}\\
 		&~~~~~~\vee \LTLsquare (\#,\#)_\pi\Bigg) \label{6.31}
 		\end{align}
 		\caption{Reduction to HyperLTL for the PCP instance from Example~\ref{pcp_example}.}
 		\label{pcpToHyperEx}
 	\end{figure}
 	%It is impossible to provide a canonical model for a \allex HyperLTL formula, as we did for the other fragments in the previous sections. With one quantifier alternation allowed, the universal quantifier up front has too much impact on the size of the model.
 	We now show that any extension beyond the already considered fragments makes the satisfiability problem undecidable. 
 	We prove this with a many-one-reduction from \emph{Post's correspondence problem} (PCP)~\cite{post1946variant} to the satisfiability of a $\forall\exists$ HyperLTL formula.
 	In PCP, we are given two lists $\alpha$ and $\beta$ consisting of finite words from some alphabet $\Sigma$.
 	%For example, for $\Sigma \coloneqq \{a,b,c,d,e,f\}$ $\alpha$ could be the list $[\mathit{abc},\mathit{def}]$ and $\beta$ could be the list $[fed,bac]$.
 	For example, $\alpha$, with $\alpha_1 = a$, $\alpha_2 = ab$ and $\alpha_3 = bba$ and $\beta$, with $\beta_1 = baa$, $\beta_2 = aa$ and $\beta_3 = bb$, where $\alpha_i$ denotes the $i$\emph{th} element of the list, and $\alpha_{i_j}$ denotes the $j$\emph{th} symbol of the $i$\emph{th} element.
 	%A possible solution for this PCP instance would be $(3,2,3,1)$, since $\alpha_3 \alpha_2 \alpha_3 \alpha_1 = bbaabbbaa$ and $\beta_3 \beta_2 \beta_3 \beta_1 = bbaabbbaa$ as well.
 	In this example, $\alpha_{3_1}$ corresponds to $b$.
 	PCP is the problem to find an index sequence $(i_k)_{1 \leq k \leq K}$ with $K \geq 1$ and $1 \leq i_k \leq n$ for all $k$, such that $\alpha_{i_1}\dots\alpha_{i_K}=\beta_{i_1}\dots\beta_{i_K}$.
 	%where $\alpha$ and $\beta$ are lists ranging over a given alphabet $\Sigma$, with $|\alpha| = |\beta| \geq 2$ and $\alpha_1,\ldots ,\alpha_n$ (respectively $\beta_1,\ldots ,\beta_n$) denote the elements of the lists.
 	We denote the finite words of a PCP solution with $i_\alpha$ and $i_\beta$ respectively.
 	%Note that $i_\alpha = i_\beta$ for a valid PCP solution.
 	
 	It is a useful intuition to think of the PCP instance as a set of $n$ domino stones. The first stone of our example is \stone{a}{baa}, the second is \stone{ab}{aa} and the third, and last, is \stone{bba}{bb}.
 	Those stones must be arranged (where copying is allowed) to construct the same word with the $\alpha$- and $\beta$-concatenations.
 	%A possible solution for this PCP instance would be $(3,2,3,1)$, since the stone sequence $\overset{\mathit{bba}}{\underset{\mathit{bb}}{}}$ $\overset{\mathit{ab}}{\underset{\mathit{aa}}{}}$ $\overset{\mathit{bba}}{\underset{\mathit{bb}}{}}$ $\overset{\mathit{a}}{\underset{\mathit{baa}}{}}$ produces the same word, i.e. $bbaabbbaa = i_\alpha = i_\beta$.
 	A possible solution for this PCP instance would be $(3,2,3,1)$, since the stone sequence \stone{bba}{bb} \stone{ab}{aa} \stone{bba}{bb} \stone{a}{baa} produces the same word, i.e., $bbaabbbaa = i_\alpha = i_\beta$.
 	%$\alpha_3 \alpha_2 \alpha_3 \alpha_1 = bbaabbbaa$ and $\beta_3 \beta_2 \beta_3 \beta_1 = bbaabbbaa$ as well.
 	For modelling the necessary correspondence between the $\alpha$ and $\beta$ components, we will use pairs of the PCP instance alphabet as atomic propositions, e.g., $(a,b)$.
 	We represent a stone as a sequence of such pairs, where the first position of the pair contains a symbol of the $\alpha$ component and the second position a symbol of the $\beta$ component.
 	For example, the first stone \stone{a}{baa} will be represented as $(a,b),(\#,b)(\#,a)$.
 	We will use $\#$ as a termination symbol.
 	Since the $\alpha$ and $\beta$ component of a stone may differ in its length, a sequence of stone representations might ``overlap''.
 	%we will use the symbol $*$ as notation for an arbitrary element of the alphabet.
 	%Because the stone representations may \enquote{overlap}, we use the symbol $\dot{a}$, if a new stone starts here.
 	Therefore, we indicate the start of a new stone with a dotted symbol.
 	For example, we can string the first stone two times together: $(\dot{a},\dot{b}),(\dot{a},b)(\#,a)(\#,\dot{b})(\#,b)(\#,a)$.
 	%We define some useful notation to preserve readability.
 	In the following, we write $\tilde{a}$ if we do not care if this symbol is an $a$ or $\dot{a}$ and use $*$ as syntactic sugar for an arbitrary symbol of the alphabet.
 	%Given an arbitrary PCP instance and the $i$\emph{th} stone.
 	%with its corresponding alphacomponent the elements of $\alpha$ will be represented in the first component of a tuple and the elements of $\beta$ in the second component.
 	%We will refer to a pair $(\alpha_i,\beta_i)$ as the $i$th \enquote{stone}.
 	%Given two atomic propositions $a$ and $\dot{a}$, the dot marks that a stone starts here.
 	%We will write $\tilde{a}$ if we do not care if this symbol is an $a$ or $\dot{a}$.
 	%We will use the symbol $*$ as notation for an arbitrary element of the alphabet.
 	%We use the symbol $\#$ as termination symbol.
 	We assume that only singletons are allowed as elements of the trace, which could be achieved by adding for every atomic proposition $(y_1,y_2)$ the conjunction $\bigwedge_{(y_1,y_2) \not = (y,y')} \LTLsquare (\neg ((y_1,y_2) \wedge (y,y')))$, for all $(y,y')$.
 	\begin{figure}
 		\begin{align}
 		&\mathit{StoneEncoding_3}=\\
 		&\bigg( \Big(((\dot b, \dot b)_\pi \wedge \LTLnext (b,b)_\pi \wedge \LTLnext \LTLnext (a,\dot *)_\pi \wedge \LTLnext \LTLnext \LTLnext (\dot *, \tilde{*})_\pi) \label{6.23} \\
 		%	&~~~~~~~~~~~~~~~~\vee ((\dot b, \dot b)_\pi \wedge \LTLnext (b,b)_\pi \wedge \LTLnext \LTLnext (a,\#)_\pi \wedge \LTLnext \LTLnext \LTLnext (\dot *, \#)_\pi) \\
 		%	&~~~~~~~~~~~~~~~~\vee ((\dot b, \dot b)_\pi \wedge \LTLnext (b,b)_\pi \wedge \LTLnext \LTLnext (a,\dot *)_\pi \wedge \LTLnext \LTLnext \LTLnext (\#, \tilde{*})_\pi) \\
 		&~~~~~~~~~~~~~~~~\vee ((\dot b, \dot b)_\pi \wedge \LTLnext (b,b)_\pi \wedge \LTLnext \LTLnext (a,\#)_\pi \wedge \LTLnext \LTLnext \LTLnext (\#, \#)_\pi) \Big) \label{6.24} \\
 		&~~~~~~~~~~~\wedge \LTLsquare (\LTLnext \LTLnext \LTLnext (\tilde{a}, *)_\pi \rightarrow (\tilde{a},*)_{\pi'}) \label{6.25} \\
 		&~~~~~~~~~~~\wedge \LTLsquare (\LTLnext \LTLnext \LTLnext (\tilde{b}, *)_\pi \rightarrow (\tilde{b},*)_{\pi'}) \label{6.26} \\
 		&~~~~~~~~~~~\wedge \LTLsquare (\LTLnext \LTLnext \LTLnext (\#, *)_\pi \rightarrow (\#,*)_{\pi'}) \label{6.27} \\
 		&~~~~~~~~~~~\wedge \LTLsquare (\LTLnext \LTLnext (*,\tilde{a})_{\pi} \rightarrow (*,\tilde{a})_{\pi'}) \label{6.28} \\
 		&~~~~~~~~~~~\wedge \LTLsquare (\LTLnext \LTLnext (*,\tilde{b})_{\pi} \rightarrow (*,\tilde{b})_{\pi'}) \label{6.29} \\
 		&~~~~~~~~~~~\wedge \LTLsquare (\LTLnext \LTLnext (*,\#)_{\pi} \rightarrow (*,\#)_{\pi'})\bigg) \label{6.30}
 		\end{align}
 		\caption{Formula in the reduction of the PCP instance from Example~\ref{pcp_example}, encoding that a trace may start with a valid stone 3 and that there must also exist a trace where stone 3 is deleted.}
 		\label{encoding3}
 	\end{figure}
 	\begin{example}
 		\label{pcp_example}
 		Consider, again, the following PCP instance with $\Sigma = \{a,b\}$.
 		Two lists $\alpha$, with $\alpha_1 = a$, $\alpha_2 = ab$ and $\alpha_3 = bba$ and $\beta$, with $\beta_1 = baa$, $\beta_2 = aa$ and $\beta_3 = bb$.
 		%A possible solution for this PCP instance would be $(3,2,3,1)$, since $\alpha_3 \alpha_2 \alpha_3 \alpha_1 = bbaabbbaa$ and $\beta_3 \beta_2 \beta_3 \beta_1 = bbaabbbaa$ as well.
 		We can reduce this PCP instance to the question whether the HyperLTL formula shown in Figure~\ref{pcpToHyperEx} is satisfiable.
 		Let $\mathit{AP} \coloneqq (\{a,b,\dot a,\dot b\}\cup\{\#\})^{2}$.
 		The stone encoding is sketched with the example of stone $3$ in Figure~\ref{encoding3}.
 		
 		The subformula (\ref{6.4}) expresses that there exists a trace that starts with $(\dot a, \dot a)$ or $(\dot b, \dot b)$.
 		Intuitively, this means that there must exist a stone whose $\alpha$ and $\beta$ component start with the same symbol.
 		Subformula (\ref{6.5}) requires that there exists a ``solution'' trace $\pi_s$.
 		%whose $\alpha$ and $\beta$ component match at all positions.
 		It ensures that the trace ends synchronously with $(\#,\#)^\omega$.
 		Combined, this guarantees that the word constructed from the $\alpha$ components is equal to the word constructed from the $\beta$ components, i.e., $i_\alpha = i_\beta$ for a PCP solution $i(k)$.
 		Subformula (\ref{6.6}) ensures that every trace eventually ends with the termination symbol $\#$. It is important to notice here that all traces besides $\pi_s$ are allowed to end asynchronously.
 		
 		It remains to ensure that trace $\pi_s$ only consists of valid stones. This is where the $\forall \exists$ structure of the quantifier prefix comes into play. The key idea is to use a $\forall\exists$ formula to specify that for every trace with at least one stone there is another trace \emph{with the first stone removed}.
 		Since we check that \emph{every} trace begins with a valid stone, this implies that all stones are valid. 
 		%
 		%We check this, by recursively removing the first stone of the solution trace while checking them for validity. We achieve this, by copying the solution trace without the first stone.
 		%By ensuring that \emph{every} trace in the satisfying trace set starts with a valid stone, $\pi_s$ only consists of valid stones.
 		The encoding of stone 3 is exemplarily shown in Figure~\ref{encoding3}.
 		%The idea is that every trace that starts with stone III is copied but stone III is deleted from the beginning, resulting in a new trace.
 		The first \emph{three} $\alpha$ components and the first \emph{two} $\beta$ components of the new trace are deleted.
 		The example set shown in Figure~\ref{pcpToHyperSet} shows this behavior for $\pi_s$, which starts with stone 3.
 		By deleting stone 3 from $\pi_s$ and shifting every position accordingly, we obtain $\pi'$.
 		Since $\pi'$  starts with a valid stone, namely stone 2, it satisfies subformula~(\ref{6.9}) for $i=2$. This requires that there exists another trace where stone 2 is deleted analogously. This argument is repeated until the trace is reduced to $(\#,\#)^\omega$, which is the only possibility for ``termination'' in the sense that $\pi_s$ ends synchronously with $(\#,\#)^\omega$.
 	\end{example}
 	Corresponding to this example, we can give a generalized reduction, establishing the undecidability of $\forall \exists$ formulas.
 	%Hence, we can conclude our last theorem. Note that the minimal fragment for which HyperLTL is undecidable is, in fact, $\forall \exists$, since we can zip the two existential quantifiers binding $\pi_s$ and $\pi'$ together as shown in Section~\ref{alt-free-frag}.
 	\begin{theorem}
 		\label{theorem-undec}
 		$\forall\exists$ HyperLTL-SAT is undecidable.
 	\end{theorem}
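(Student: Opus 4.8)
The plan is to generalise the reduction of Example~\ref{pcp_example} from the three-stone instance to an arbitrary PCP instance: from two lists $\alpha=(\alpha_1,\dots,\alpha_n)$ and $\beta=(\beta_1,\dots,\beta_n)$ of non-empty words over $\Sigma$ I would build a HyperLTL formula $\varphi$ of the shape of Figure~\ref{pcpToHyperEx} and show that $\varphi$ is satisfiable iff the instance has a solution; since PCP is undecidable, $\forall\exists$ HyperLTL-SAT is then undecidable as well. Take $\mathit{AP}\coloneqq(\Sigma\cup\dot\Sigma\cup\{\#\})^2$ with $\dot\Sigma=\{\dot\sigma\mid\sigma\in\Sigma\}$ marking stone starts, and add the singleton conjuncts from the paragraph before Figure~\ref{encoding3} so that at each position exactly one pair holds. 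Each trace then carries an $\alpha$-stream and a $\beta$-stream over $\Sigma\cup\{\#\}$, advancing one symbol per position; a valid concatenation of stones $i_1i_2\cdots$ has $\alpha$-stream $\alpha_{i_1}\alpha_{i_2}\cdots\#^\omega$ with dots exactly at the cumulative offsets $|\alpha_{i_1}|+\cdots+|\alpha_{i_j}|$, and symmetrically for $\beta$. For each $i$ I would define a quantifier-free $\mathit{StoneEncoding}_i(\pi,\pi')$ that (a) spells out the first $\max(|\alpha_i|,|\beta_i|)$ positions of $\pi$ as stone $i$ (first $\alpha$- and $\beta$-symbol dotted; the $\alpha$-symbol immediately after $\alpha_i$ dotted or $\#$, likewise after $\beta_i$) and (b) forces $\pi'$ to be $\pi$ with the $\alpha$-stream shifted left by $|\alpha_i|$ and the $\beta$-stream shifted left by $|\beta_i|$ --- ``$\pi$ with stone $i$ deleted'' --- exactly as lines (\ref{6.23})--(\ref{6.30}) do for stone $3$. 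The formula is
\[
\forall\pi.\;\exists\pi_s.\;\exists\pi'.\;\; \varphi_{\mathit{sol}}(\pi_s)\;\wedge\;\LTLdiamond\LTLsquare(\#,\#)_\pi\;\wedge\;\Big(\textstyle\bigvee_{i=1}^{n}\mathit{StoneEncoding}_i(\pi,\pi')\;\vee\;\LTLsquare(\#,\#)_\pi\Big),
\]
where $\varphi_{\mathit{sol}}(\pi_s)$, generalising (\ref{6.4})--(\ref{6.5}), asserts $\bigvee_{\sigma\in\Sigma}(\dot\sigma,\dot\sigma)_{\pi_s}$ together with $\big(\bigvee_{\sigma\in\Sigma}(\tilde\sigma,\tilde\sigma)_{\pi_s}\big)\LTLuntil \LTLsquare(\#,\#)_{\pi_s}$.

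For the ``if'' direction, given a solution $(i_1,\dots,i_K)$ I would take $T=\{t_0,\dots,t_K\}$ where $t_k$ has $\alpha$-stream $\alpha_{i_{k+1}}\cdots\alpha_{i_K}\#^\omega$ and $\beta$-stream $\beta_{i_{k+1}}\cdots\beta_{i_K}\#^\omega$ (dotted at stone boundaries), so that $t_K=(\#,\#)^\omega$. For every $\pi=t_k$: picking $\pi_s\coloneqq t_0$ satisfies $\varphi_{\mathit{sol}}$, because $\alpha_{i_1}\cdots\alpha_{i_K}=\beta_{i_1}\cdots\beta_{i_K}$ makes the two streams of $t_0$ agree symbol by symbol and turn to $\#$ at the same position; $\LTLdiamond\LTLsquare(\#,\#)_\pi$ holds since both streams of $t_k$ are eventually $\#$; and if $k<K$ then $t_k$ begins with the valid stone $i_{k+1}$ and $\pi'\coloneqq t_{k+1}\in T$ witnesses $\mathit{StoneEncoding}_{i_{k+1}}$, while if $k=K$ the disjunct $\LTLsquare(\#,\#)_\pi$ applies. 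Hence $T\models\varphi$.

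For the ``only if'' direction, let $T\models\varphi$. Instantiating $\forall\pi$ with any trace of the non-empty model yields, via $\varphi_{\mathit{sol}}$, a trace $t_0\in T$ whose two projections are equal at every position and both equal $w\#^\omega$ for a non-empty word $w\in\Sigma^{N}$ ($N$ the synchronous termination point), and with $t_0[0]\in\dot\Sigma\times\dot\Sigma$, so $t_0\neq(\#,\#)^\omega$. Instantiating $\forall\pi$ with $t_0$ then forces, for some index $i_1$, both that $t_0$ starts with the valid stone $i_1$ and that the existential witness $\pi'$ (call it $t_1$) is $t_0$ with its $\alpha$-stream shifted left by $|\alpha_{i_1}|$ and its $\beta$-stream shifted left by $|\beta_{i_1}|$; in particular $t_1\in T$, and the non-$\#$ prefix of its $\alpha$-stream has length $N-|\alpha_{i_1}|$, likewise for $\beta$. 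Iterating --- each step legal because a trace of $T$ distinct from $(\#,\#)^\omega$ must again start with a valid stone --- the non-$\#$ $\alpha$-prefix length strictly decreases, so after $K\le N$ steps we reach $t_K=(\#,\#)^\omega$; moreover, since every $t_k$ with $k<K$ starts with a \emph{valid} stone, both its residual $\alpha$- and $\beta$-prefixes are non-empty, which forces $\sum_{j=1}^{K}|\alpha_{i_j}|=N=\sum_{j=1}^{K}|\beta_{i_j}|$. Reassembling, $w=\alpha_{i_1}\cdots\alpha_{i_K}$, and because the two projections of $t_0$ coincide also $w=\beta_{i_1}\cdots\beta_{i_K}$; hence $(i_1,\dots,i_K)$ with $K\ge1$ is a solution of the PCP instance.

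The constructions of $\varphi_{\mathit{sol}}$ and $\mathit{StoneEncoding}_i$ and the ``if'' direction are routine generalisations of Figures~\ref{pcpToHyperEx} and~\ref{encoding3}. The delicate part is the ``only if'' direction: deleting stone $i$ shifts the $\alpha$-stream by $|\alpha_i|$ but the $\beta$-stream by the possibly different $|\beta_i|$, so the intermediate traces $t_k$ ($k\ge1$) need not terminate synchronously, and one has to argue carefully that the peeling still terminates and that at termination both streams have consumed the \emph{same} total length $N$. This is exactly where the synchronous termination of $t_0$ enforced by $\varphi_{\mathit{sol}}$, together with the requirement that every non-terminal trace of $T$ start with a valid stone (so that \emph{both} its residual streams are non-empty), does the work. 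I would also double-check that part~(b) of $\mathit{StoneEncoding}_i$ really pins down the residual streams of $\pi'$ given $\pi$ --- so that the peeled trace $t_{k+1}$ is determined up to the irrelevant placement of dots --- and that the singleton conjuncts do not exclude the intended models. \qed
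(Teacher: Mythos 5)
Your reduction is essentially the paper's own proof: the same PCP encoding with paired $\alpha$/$\beta$ streams, dotted stone starts, a synchronously terminating solution trace $\pi_s$, and the $\forall\exists$ ``peel off the first stone'' trick, with the same two-directional correctness argument (including the correct observation that intermediate traces may terminate asynchronously and that synchronous termination of $\pi_s$ plus validity of every leading stone forces the two total lengths to agree). The one step you omit is that your formula has prefix $\forall\pi\,\exists\pi_s\,\exists\pi'$, i.e.\ it lies in the $\forall\exists\exists$ fragment, whereas the theorem claims undecidability already for a single $\forall\exists$ pair; the paper closes this gap by invoking the zipping construction of Section~\ref{secex} (Definition~\ref{psi_exists}) to merge the two existential witnesses into one trace over a product alphabet. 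You should add that final sentence (and convince yourself the zipping argument still works under an outer universal quantifier); otherwise the argument is complete.
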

 	
 	\begin{proof}
 		Let a PCP instance with $\Sigma = \{a_1,a_2,..., a_n\}$ and two lists $\alpha$ and $\beta$ be given.
 		%We denote the $i$th element of a list with $\alpha_i$ or $\beta_i$ respectively.
 		We choose our alphabet as follows: $\Sigma' = (\Sigma \cup \{\dot a_1, \dot a_2,... ,\dot a_n\} \cup {\#})^2$, where we use the dot symbol to encode that a stone starts at this position of the trace.
 		Again, we write $\tilde{a}$ if we do not care if this symbol is an $a$ or $\dot{a}$ and use $*$ as syntactic sugar for an arbitrary symbol of the alphabet.
 		We encode the idea from Example~\ref{pcp_example} in the following formula.
 		\[\varphi_{\text{reduc}} \coloneqq \forall \pi \exists \pi_s \exists \pi'.~ \varphi_{\text{sol}}(\pi_s) \wedge \varphi_{\text{validStone}}(\pi) \wedge \varphi_{\text{delete}}(\pi,\pi') \wedge \LTLdiamond \LTLsquare (\#,\#)_\pi\]
 		\begin{itemize}
 			\item
 			$\varphi_{\text{sol}}(\pi_s) \coloneqq (\bigvee_{i=1}^n(\dot a_i, \dot a_i)_{\pi_s}) \wedge (\bigvee_{i=1}^n (\tilde{a_i},\tilde{a_i})_{\pi_s}) \LTLuntil \LTLsquare (\#,\#)_{\pi_s}$
 			
 			We ensure that there exists a ``solution'' trace $\pi_s$, which starts pointed, i.e., where the $\alpha$ and $\beta$ components are the same. Accordingly to PCP, we require \emph{synchronous} ``termination''.
 			\item
 			$\varphi_{validStone}(\pi)$. This is ensured by a generalization of lines (\ref{6.23}) and (\ref{6.24}) of the stone encoding sketched in Figure~\ref{encoding3}.
 			
 			Every trace in the trace set starts with a valid stone. Note that we do \emph{not} require synchronous termination in any other trace than the ``solution'' trace.
 			
 			\item
 			$\varphi_{\text{delete}}(\pi,\pi')$. This is ensured by a generalization of lines (\ref{6.25}) to (\ref{6.30}) of the stone encoding sketched in Figure~\ref{encoding3}.
 			
 			By exploiting the $\forall \exists$ structure of the formula, we encode that for every trace $\pi$ there exists another trace $\pi'$ which is \emph{nearly} an exact copy of $\pi$ but with its first stone \emph{removed}.
 		\end{itemize}
 		\emph{Correctness}. We prove correctness of the reduction by showing that if there exists a solution, namely an index sequence $i(l)$ with $l \in \mathbb{N}$, for a PCP instance, then there exists a trace set $T$ satisfying the resulting formula $\varphi_{\text{reduc}}$ and vice versa. For the sake of readability, we, again, omit the set braces around atomic propositions, since we can assume that only singletons occur.
 		\begin{itemize}
 			\item
 			Assume there exists a solution $i$ to the given PCP instance with $|i_\alpha| = |i_\beta| = k$.
 			We can construct a trace set $T$ by building the trace $(i_\alpha[0],i_\beta[0])\ldots(i_\alpha[k],i_\beta[k])(\#,\#)^{\omega}$, denoted by $t_0$ and adding a dot to the symbol corresponding to the new stones start. We can infer the correct placement of the dots from the solution.
 			%Let the solution $i$ start with stone $j$.
 			%$|\alpha_j|$, denoting the length, be $m_a$, $|\beta_j|$ be $m_b$.
 			We distinguish two cases. If the solution is only of length $1$, we add $(\#,\#)^\omega$ to $T$ and successfully constructed a trace set satisfying the formula.
 			Otherwise, let $t_0$ start with stone $j$. We also add one of the following traces $t_1$ based on $t_0$ to $T$:
 			\begin{align*}
 			\text{if}~|\alpha_j| = |\beta_j| :~&(i[|\alpha_j|],i[|\beta_j|])\ldots(i[k],i[k])(\#,\#)^{\omega}\\ 
 			\text{if}~|\alpha_j| < |\beta_j| :~&(i[|\alpha_j|],i[|\beta_j|])\ldots(i[k],i[k-|\beta_j|+|\alpha_j|])\ldots(\#,i[k])(\#,\#)^{\omega}\\
 			\text{if}~|\alpha_j| > |\beta_j| :~&(i[|\alpha_j|],i[|\beta_j|])\ldots(i[k-|\alpha_j|+|\beta_j|],i[k])\ldots(i[k],\#)(\#,\#)^{\omega}
 			\end{align*}
 			We repeat adding traces $t_n$ based on the starting stone of every newly added trace $t_{n-1}$ until we terminate with $(\#,\#)^\omega$.
 			Note that $t_{n-1}$ might already end asynchronously.
 			By construction this is exactly a trace set $T$ satisfying $\varphi_{\text{reduc}}$.
 			%\newpage
 			\item
 			Let the formula $\varphi_{\text{reduc}}$ be satisfiable by a trace set $T$. Therefore, there exists a witness $t_0$ for $\pi_s$, which starts with a dot, whose $\alpha$ and $\beta$ components are the same at all positions, and which ends \emph{synchronously} with $(\#,\#)^\omega$.
 			$t_0$ also needs to start with a valid stone, which is ensured by the stone encoding, since otherwise $t_0 \not \in T$.
 			By construction there exists a subset $T_{\text{min}} \subseteq T$ that satisfies $\varphi_{\text{reduc}}$, which contains $t_0$ and every trace constructed by deleting one stone after another, with the last trace being $(\#,\#)^\omega$.
 			%This is a valid solution since we delete valid stones only finitely often, because eventually $t_1$ needs to terminate synchronously with $(\#,\#)^\omega$, ensuring that the solution remains finite.
 			Because $t_0$ eventually terminates synchronously with $(\#,\#)$, the solution remains finite.
 			We define a total order for the traces in $T_{\text{min}}$ according to the number of dots or, equivalently, the number of stones. We also define a function $s$ that maps traces to the index of their starting stone. Let $A = [t_0,t_1,\ldots,t_n]$ be the list of traces in $T_{\text{min}}$ sorted in descending order. A possible solution for the PCP instance is the index sequence $s(t_0)~s(t_1)\ldots s(t_n)$.
 		\end{itemize}
 		Since we can use the construction from Section~\ref{secex}, the minimal undecidable fragment of HyperLTL is, in fact, $\forall \exists$. \qed
 		%Therefore the minimal fragment of HyperLTL that is undecidable is $\forall \exists$, since we can use the construction from Section~\ref{secex} to zip the two existential quantifiers, which we used for the encoding, together.
 	\end{proof}
 	
 	\begin{figure}[t]
 		\begin{align*}
 		&\emph{\text{Start}} \\
 		&\pi_s:~(\dot b, \dot b)(b,b)(a,\dot a)(\dot a, a)(b, \dot b)(\dot b, b)(b,\dot b)(a,a)(\dot a,a)(\#,\#) (\#,\#) \ldots \\
 		&\emph{\text{Delete stone 3}} \\
 		&\pi_s':~(\dot a, \dot a)(b,a)(\dot b, \dot b)(b,b)(a,\dot b)(\dot a, a)(\#,a)(\#,\#)(\#,\#) \ldots \\
 		&\emph{\text{Delete stone 2}} \\
 		&\pi_s'':~(\dot b, \dot b)(b,b)(a,\dot b)(\dot a,a)(\#,a)(\#,\#)(\#,\#) \ldots \\
 		&\emph{\text{Delete stone 3}} \\
 		&\pi_s''':~(\dot a, \dot b)(\#,a)(\#,a)(\#,\#)(\#,\#) \ldots \\
 		&\emph{\text{Delete stone 1}} \\
 		&\pi_s'''':~(\#,\#)(\#,\#) \ldots \\
 		&\emph{\text{End}}
 		\end{align*}
 		\caption{Trace set satisfying the formula from Example~\ref{pcp_example}, with omitted set braces around the atomic propositions.}
 		\label{pcpToHyperSet}
 	\end{figure}
 	
 	%\begin{proof}[Proof Sketch]
 	%It can be proven via a reduction from PCP, which is a generalized version following the ideas mentioned in Example~\ref{pcp_example}.
 	%\end{proof}
 	%\section{Implementation and Experimental Results}
 	%\label{impl}
 	
 	\section{Conclusion}
 	\label{sumFut}
 	We have analyzed the decidability and complexity of the satisfiability problem for various fragments of HyperLTL. The largest decidable fragment of HyperLTL is the EXPSPACE-complete $\exists^*\forall^*$ fragment; the alternation-free \ex and \all formulas are PSPACE-complete; any fragment that contains the $\forall\exists$ formulas is undecidable. Despite the general undecidability, our
 	results provide a strong motivation to develop a practical SAT checker for HyperLTL. The key result is the PSPACE-completeness for the alternation-free fragment and the bounded \exall fragment, which means that for the important class of hyperproperties that can be expressed as a HyperLTL formula with a bounded number of exclusively universal or exclusively existential quantifiers, satisfiability and implication can be decided within the same complexity class as LTL. 
 	
 	There are several directions for future work. An important open question concerns the extension to branching time. HyperLTL is a sublogic of the branching-time temporal logic HyperCTL$^*$~\cite{DBLP:conf/post/ClarksonFKMRS14}. While the undecidability of HyperLTL implies that HyperCTL$^*$ is also, in general, undecidable (this was already established in~\cite{DBLP:conf/post/ClarksonFKMRS14}), the obvious question is whether it is possible to establish decidable fragments in a similar fashion as for HyperLTL.
 	
 	Another intriguing, and still unexplored, direction is the synthesis problem for HyperLTL (and HyperCTL$^*$) specifications. In synthesis, we ask for the existence of an implementation,
 	which is usually understood as an infinite tree that branches according to the possible inputs to a system and whose nodes are labeled with the outputs of the system. Since HyperLTL can express partial observability, the synthesis problem for HyperLTL naturally generalizes the well-studied synthesis  under incomplete information~\cite{kupfermant2000synthesis} and the synthesis of distributed systems~\cite{DBLP:conf/focs/PnueliR90}.
 	
 	Finally, it will be interesting to develop a practical implementation of the constructions presented in this paper and to use this implementation to analyze the relationships between various hyperproperties studied in the literature.
 	
 	%%
 	%% Bibliography
 	%%
 	
 	%% Either use bibtex (recommended), but commented out in this sample
 	
 	\bibliography{newlib}

 \end{document}